\definecolor{marceloColor}{RGB}{26, 192, 19} \TaskNewPerson{Marcelo}{Ma}{marceloColor}
\title{Directed disjoint paths remains W[1]-hard on acyclic digraphs without large grid minors}
\titlerunning{Directed disjoint paths remains W[1]-hard on DAGs without large grid minors}
\author{Ken-ichi Kawarabayashi}{National Institute of Informatics, Japan \and \url{https://research.nii.ac.jp/~k_keniti/} }{k_keniti@nii.ac.jp}{https://orcid.org/0000-0001-6056-4287}{}
\author{Nicola Lorenz}{Universität Hamburg, Germany}{nicola.lorenz@uni-hamburg.de}{https://orcid.org/0009-0000-1991-1523}{}
\author{Marcelo {Garlet Milani}}{National Institute of Informatics, Japan \and \url{https://mgarletmilani.com} }{research@mgarletmilani.com}{https://orcid.org/0000-0001-8398-4751}{}
\author{Jacob Stegemann}{Universität Hamburg, Germany}{jacob.stegemann@uni-hamburg.de}{https://orcid.org/0009-0000-6847-1499}{}
\authorrunning{K. Kawarabayashi, N. Lorenz, M. G. Milani, J. Stegemann} 
\keywords{digraphs, parameterized complexity, disjoint paths, butterfly minors, immersions, ear anonymity}
\begin{document}

\maketitle

\begin{abstract}
	In the \kLinkageCongestion/ problem, the input consists of a digraph \(D\), an integer \(c\) and
	\(k\) pairs of vertices \((s_i, t_i)\), and the task
	is to find a set of paths connecting each \(s_i\) to its corresponding \(t_i\),
	whereas each vertex of \(D\) appears in at most \(c\) many paths.
	The case where \(c = 1\) is known to be NP-complete even if \(k = 2\) [Fortune, Hopcroft and Wyllie, 1980] on general digraphs and
	is \W/[1]-hard with respect to \(k\) (excluding the possibility of an \(f(k)n^{O(1)}\)-time algorithm under standard assumptions) on acyclic digraphs [Slivkins, 2010].
	The proof of [Slivkins, 2010] can also be adapted to show \W/[1]-hardness with respect to \(k\) for every congestion \(c \geq 1\).

	We strengthen the existing hardness result by showing that the problem remains \W/[1]-hard for every congestion \(c \geq 1\)
	even if:
	\begin{itemize}
    \item the input digraph \(D\) is acyclic,
    \item \(D\) does not contain an acyclic \((5, 5)\)-grid  as a butterfly minor,
    \item \(D\) does not contain an acyclic tournament on 9 vertices as a butterfly minor, and
    \item \(D\) has ear-anonymity at most 5.
	\end{itemize}

	Further, we also show that the edge-congestion variant of the problem remains \W/[1]-hard for every congestion \(c \geq 1\)
	even if:
	\begin{itemize}
    \item the input digraph \(D\) is acyclic,
    \item \(D\) has maximum undirected degree 3,
    \item \(D\) does not contain an acyclic \((7, 7)\)-wall as a weak immersion and
    \item \(D\) has ear-anonymity at most 5.
	\end{itemize}
\end{abstract}

\newpage
\tableofcontents

\TaskDone{Create todo-list.}

\TaskMarcelo[done]{Write abstract and intro}

\section{Introduction}
\label{sec:introduction}

The \kLinkage/ (\kLinkageShort/) problem is a classic problem for directed graphs (digraphs)
where the input consists of a digraph \(D\) and \(k\) vertex pairs (called terminal pairs)
of the form \((s_i, t_i)\), and the task is to decide
whether \(k\) pairwise disjoint paths \(P_{1}, P_{2}, \ldots, P_{k}\) exist in \(D\)
such that each \(P_i\) is a path from \(s_i\) to \(t_i\).

\kLinkageShort/ is \NP/-complete even if \(k = 2\) \cite{FHW1978}.
On directed acyclic graphs (DAGs), the problem can be solved in \(n^{\Bo{k}}\) time \cite{FHW1978}
and 
is \W/[1]-hard with respect to \(k\) 
(that is, there is no \(f(k) n^{\Bo{1}}\)-time algorithm under standard assumptions) \cite{slivkins2010parameterized}.
Under the assumption of the Exponential Time Hypothesis (ETH),
no \(n^{o(k)}\)-time algorithm exists for \kLinkageShort/ on DAGs \cite{Chitnis23}, implying that
the \(n^{\Bo{k}}\)-time algorithm is essentially optimal.

We investigate the impact on \kLinkageShort/ on DAGs of excluding minors and
imposing further structural restrictions on the input digraph.
There are several notions of minors for digraphs, including
topological minors, butterfly minors and strong minors.
Strong minors are not meaningful on acyclic digraphs,
and if a digraph \(D\) contains \(H\) as a topological minor,
then it also contains \(H\) as a butterfly minor.
Hence, out of the three relations above, butterfly minors are the most adequate for our results.

\emph{Directed tree-width} was introduced by \cite{johnson2001directed},
who also proved that
\kLinkageShort/ can be solved in \(n^{f(k, \DTreewidth{D})}\) time,
where \(\DTreewidth{D}\) is the \emph{directed tree-width} of the input digraph~\(D\).
As the directed tree-width of DAGs is 0, this roughly generalizes 
the \(n^{\Bo{k}}\)-time algorithm for DAGs mentioned above.

The edge-disjoint variant of the problem is called \EdgeDisjointPaths/ (\EdgeDisjointPathsShort/).
\cite{CKK24} proved that \EdgeDisjointPathsShort/ can be solved in \(f(k)n^{\Bo{1}}\)-time on Eulerian digraphs.
\EdgeDisjointPathsShort/ is
\W/[1]-hard with respect to \(k\) even if the input DAG is planar \cite{Chitnis23}.
We observe that the reduction from the edge-disjoint to the vertex-disjoint variant of \EdgeDisjointPathsShort/ mentioned above
does not preserve planarity, and
so this hardness result does not immediately transfer to the vertex-disjoint variant.
In fact, vertex-disjoint \EdgeDisjointPathsShort/ is solvable in \(f(k)n^{\Bo{1}}\)-time on planar digraphs \cite{CyganMPP13}.

Relaxations of \kLinkageShort/ were considered in order to find tractable settings.
One such relaxation is \kLinkageCongestion/ (\kLinkageCongestionShort/),
where each vertex/edge can be used by at most \(c\) many paths in the solution
instead of requiring disjointness.
The value \(c\) is referred to as the maximum \emph{congestion} at a vertex/edge.

The hardness reduction due to \cite{slivkins2010parameterized}
can be adapted to \kLinkageCongestionShort/,
also yielding \W/[1]-hardness with respect to \(k\) on DAGs \cite{AKMR19}.
This was further strengthened by \cite{AKMR19}, who showed that, assuming the \emph{Exponential Time Hypothesis} (ETH),
for every computable function \(f\) and every constant \(c \geq 1\),
there is no \(f(k) n^{o(k / \log{k})}\)-time algorithm for \kLinkageCongestionShort/ on DAGs.

\cite{Wlodarczyk2024} studied an approximation variant of \kLinkageShort/
where the goal is to either find a solution for the given instance or
decide that no solution joining at least \(k / q\) terminal pairs exist, for some constant \(q\).
They proved that such a variant is \W/[1]-hard on DAGs for all constants \(q\).

A maximization variant of \kLinkageShort/ was also studied.
Approximation properties of this variant have been considered by
\cite{ChekuriKS06,Wlodarczyk2024,ChalermsookLN14,ChekuriKS09,ChekuriKS04,ChuzhoyKL16,KolliopoulosS04,ChuzhoyKN22,EneMPR16,KleinbergT98}.

The undirected analog of \kLinkageShort/ is computationally much easier
and can be solved in \(f(k) n^{3}\) time for some function \(f\) \cite{RobertsonS95b}.
The exponent of \(n\) was later improved to 2 \cite{KawarabayashiKR12} and 
recently to \(1 + o(1)\) \cite{KorhonenPS24}.

In rough terms, the algorithms of \cite{RobertsonS95b, KawarabayashiKR12, KorhonenPS24} above work by
applying the \emph{irrelevant vertex technique} in order to repeatedly remove vertices from the graph
until the \emph{tree-width} is sufficiently small so that the problem can be solved efficiently.
Towards this end, the algorithms rely on the \emph{Flat Wall Theorem} \cite{RobertsonS95b},
which roughly states that, for every graph \(G\), we have three cases:
\(G\) contains a large clique as a minor,
\(G\)	contains a large \emph{flat wall}, or
\(G\) has bounded tree-width.

On a somewhat different direction, \cite{Milani24} introduced the parameter
\emph{ear anonymity}, a parameter which tries to capture the ``structure'' of maximal paths in a digraph, and
asked if \kLinkageShort/ can be solved in \(f(k) n^{g(\eanon{D})}\) time on DAGs,
where \(\eanon{D}\) is the \emph{ear anonymity} of the input digraph \(D\).

We prove that \kLinkageCongestionShort/ and its edge-disjoint variant remain computationally hard
even if we impose structural restrictions as described above.
We obtain additional lower bounds when assuming the Exponential Time Hypothesis (ETH).
Below, \(\TT{9}\) is the unique acyclic orientation of the undirected clique on 9 vertices,
and the acyclic \((5, 5)\)-grid is the orientation of the undirected \((5,5)\)-grid where
each edge is oriented from ``left'' to ``right'' and from ``top'' to ``bottom''.

\begin{restatable*}{theorem}{hardnessVertexDisjoint}
    \label{statement:linkage-is-w1-hard-even-with-no-grid-minor}
    For every congestion \(g \geq 1\), \kLinkageCongestion/ is 
		\W/[1]-hard with respect to the number \(k\) of terminal pairs,
		even if
    \begin{itemize}
        \item The input digraph \(D\) is acyclic,
        \item \(D\) contains no \(\TT{9}\) as a butterfly minor,
				\item \(D\) contains no acyclic \((5, 5)\)-grid as butterfly minor, and
        \item \(\eanon{D} \leq 5\).
    \end{itemize}
	Furthermore, assuming the ETH, no \(f(k) n^{o(\sqrt{k/g})}\) time algorithm exists for \kLinkageCongestionShort/,
	even under the above conditions.
\end{restatable*}

Our reduction may create large complete bipartite digraphs where all edges are oriented from one partition to the other.
Hence, it would be interesting to determine whether \kLinkageShort/ remains \W/[1]-hard
even on DAGs without large acyclic grids and without such complete bipartite digraphs as butterfly minors.

For the edge-disjoint variant,
we consider \emph{weak immersions} instead of butterfly minors,
as immersions are more closely related to edge-disjoint than to vertex-disjoint paths.
Since the maximum degree is closed under immersions,
restricting the maximum degree of a digraph \(D\) also exclude the
possibility of \(D\) having any digraph with larger degree as a weak immersion.
For this reason, we consider here the acyclic wall instead of the acyclic grid,
as the grid contains vertices of degree 4.

\newcommand{\HardnessEdgeDisjointStatement}{	\label{statement:linkage-is-w1-hard-even-with-no-wall-immersion}
	For every congestion \(g \geq 1\), \EdgeDisjointPathsCongestion/ is 
	\W/[1]-hard with respect to the number \(k\) of terminal pairs,
	even if
	\begin{itemize}
			\item The input digraph \(D\) is acyclic,
			\item \(D\) has maximum undirected degree 3,
			\item \(D\) contains no acyclic \((7,7)\)-wall as a weak immersion, and
			\item \(\eanon{D} \leq 5\).
	\end{itemize}
	Furthermore, assuming the ETH, no \(f(k) n^{o(\sqrt{k/g})}\) time algorithm exists for \EdgeDisjointPathsCongestion/,
	even under the above conditions.
	}

\begin{restatable*}{theorem}{hardnessEdgeDisjoint}
\HardnessEdgeDisjointStatement
\end{restatable*}

Observe that the hardness reductions of \cite{slivkins2010parameterized,AKMR19}
do not exclude the existence of a \(\TT{k}\) in the reduced instance, and
the reduction due to \cite{Chitnis23} proving for the planar case
does not exclude the existence of acyclic grids of size \(n\).

\TaskMarcelo[done]{Section overview}

The paper is organized as follows.
\Cref{sec:preliminaries} contains definitions and notation.
In \cref{sec:vdp} we prove \cref{statement:linkage-is-w1-hard-even-with-no-grid-minor}.
\Cref{sec:vdp-reduction} contains our hardness reduction,
\cref{sec:vdp-no-grid,sec:vdp-no-tt,sec:vdp-low-ea} contain the structural analysis and
\cref{sec:vdp-hardness} completes the proof.
In \Cref{sec:edp} we prove \cref{statement:linkage-is-w1-hard-even-with-no-wall-immersion}.

\section{Preliminaries}
\label{sec:preliminaries}

\subparagraph{Sets}
For $n \in \N$ we define $[n] \coloneq \Set{1, \dots, n}$.
A \emph{multiset over a set \(X\)} is a function $f \colon X \to \N$.
For every $x \in X$, the value $f(x)$ is the \emph{multiplicity} of that element in the multiset.
We write multisets as
\(
		\Set{x_1 \colon f(x_1), x_2 \colon f(x_2), \ldots, x_n \colon f(x_n)},
\)
since this uniquely defines $f$.
When explicitly stated to be a multiset, we may also write multisets like
\(
		\Set{x_1, x_2, \ldots, x_n}.
\)
In this case, the underlying set $X$ is $\Set{x_1, x_2, \ldots, x_n}$ (understood as a usual set) and the multiplicity of an element $x \in X$ is given by $f(x) \coloneq \Abs{\CondSet{i \in [n]}{x_i = x}}$.

\Task[done]{Fill in definitions and notation.}

\subparagraph{Digraphs} 
We refer the reader to \cite{BG2009} for an introduction to digraph theory.
A directed graph (digraph) \(D\) is a tuple \((V, E)\)
where \(E \subseteq V \times V\) and 
\(E\) does not contain \emph{loops}, that is, edges of the form \((v,v)\).
We write \(\V{D} \coloneq V\) and \(\A{D} \coloneq E\).

Given a set $X \subseteq \V{D}$, we write $D - X$ for the digraph
$(Y \coloneqq \V{D} \setminus X$, $\A{D} \cap Y \times Y)$. 
Similarly, given a set $F \subseteq \A{D}$, we write $D - F$ for the digraph $(\V{D}, \A{D} \setminus F)$.

If $D$ is a digraph and $v \in V(D)$, then 
$\InN[D]{v} \coloneqq \{ u \in V \mid (u,v) \in \A{D}\}$ is the set of
\emph{in-neighbours} and $\OutN[D]{v} \coloneqq \{ u \in V \mid (v,u) \in \A{D}\}$
the set of \emph{out-neighbours} of $v$.
By $\Indeg[D]{v} \coloneqq \Abs{\InN{v}}$ we denote the \emph{in-degree} of
$v$ and by $\Outdeg[D]{v} \coloneqq \Abs{\OutN{v}}$  its out-degree.

Given a vertex \(v\) in a digraph \(D\),
the \emph{split} of \(v\) is the operation which consists of replacing \(v\) with
two vertices \(v_{\text{in}}, v_{\text{out}}\),
adding the edge \((v_{\text{in}}, v_{\text{out}})\) and
the edge \((u, v_{\text{in}}), (v_{\text{out}}, w)\)
for each \(u \in \InN{v}\) and
each \(w \in \OutN{v}\).

\subparagraph{Paths and walks.}
A \emph{walk} of length $\ell$ in a digraph $D$ is a sequence of
vertices \(W \coloneqq \Brace{v_0, v_1, \dots, v_{\ell}}\) such that
$\Brace{v_i, v_{i+1}} \subseteq \A{D}$, for all $0 \leq i < \ell$.
We say that $W$ is a $v_0$-$v_\ell$-walk.

A walk $W \coloneqq \Brace{v_0, v_1, \dots, v_{\ell}}$ is called a
\emph{path} if no vertex appears twice in it and it is called a \emph{cycle} if $v_0 = v_\ell$ and $v_i \neq v_j$ for all $0 \leq i
< j < \ell$. 
We often identify a walk $W$ in $D$ with the corresponding subgraph
and write $V(W)$ and $E(W)$ for the set of vertices and arcs
appearing on it.

Given two walks $W_1 \coloneqq (x_1, x_2, \dots, x_{j})$ and $W_2 \coloneqq (y_1,
y_2, \dots, y_{k})$, we write $W_1 \cdot W_2$
for the \emph{concatenation} of \(W_1\) and \(W_2\), defined as follows.
If $x_j = y_1$, \(W_1 \cdot W_2 \coloneqq (x_1\), \(x_2\), \(\dots,\) \(x_{j}\), \(y_2\), \(y_3\), \(\dots,\) \(y_{k})\).
If $W_1$ or $W_2$ is an empty sequence, then $W_1 \cdot W_2$ is the other walk (or the empty sequence if both walks are empty).
Finally, if \(x_j \neq y_1\), \(W_1 \cdot W_2 \coloneq W_1 \cdot (x_j, y_1) \cdot W_2\).

Given a walk $W \coloneq (x_1, x_2, \dots, x_n)$, we write $W x_i$ for the walk $(x_1, \dots, x_i)$ and similarly $x_i W$ for the walk $(x_i, \dots, x_n)$.

A digraph \(D\) without any cycles is called a \emph{directed acyclic graph} (DAG).
A vertex \(v\) in \(D\) is called a \emph{source} if \(\Indeg[D]{v} = 0\)
and a \emph{sink} if \(\Outdeg[D]{v} = 0\).

Let \(x_{1}, x_{2}, \ldots, x_{n}\) be an ordering of the vertices of a digraph \(D\).
We say that this ordering is a \emph{topological ordering} of \(D\)
if for every edge \((x_i, x_j) \in \E{D}\) we have \(i < j\).
We note that a digraph admits a topological ordering if, and only if, it is a DAG.

\subparagraph{Digraph classes}
An \emph{in-tree with (out-tree) root $r \in V(D)$} is a DAG where
for all $v \in V(D)$ there exists exactly one directed path from $v$ to $r$ (from \(r\) to \(v\) for out-trees).

The \emph{transitive tournament on \(k\) vertices} is the digraph \(\TT{k}\)
with vertex set \(v_{1}, v_{2}, \ldots, v_{k}\) and
edge set \(\CondSet{(v_i, v_j)}{1 \leq i < j \leq k}\).

We consider two types of directed grids (see \cref{fig:grid-and-wall}).
    An \emph{acyclic \((p,q)\)-grid} is the digraph with vertex set
		\(\CondSet{(r,c)}{1 \leq r \leq p \text{ and } 1 \leq c \leq q}\) and
		edge set
		\(\{((r, c), (r + 1, c)) \mid 1 \leq r \leq p - 1 \text{ and } 1 \leq c \leq q\}\) \(\cup\)
		\(\{((r, c), (r, c + 1)) \mid 1 \leq r \leq p \text{ and } 1 \leq c \leq q - 1\}\).
		                                        
				An \emph{acyclic \((p,q)\)-wall} is the digraph obtained from an
		acyclic \((p,q)\)-grid by applying the split operation
		to every vertex of indegree and outdegree exactly 2.

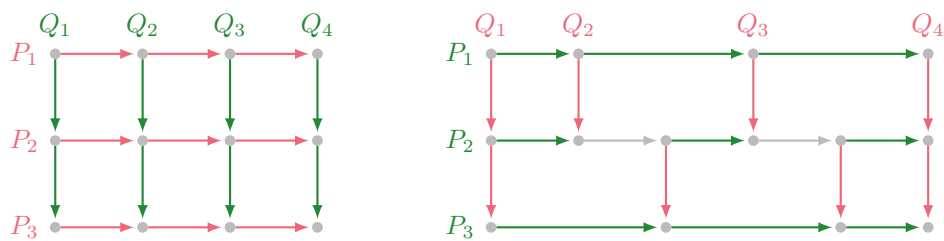
\begin{figure}
	\centering
		\begin{tikzpicture}
			\node[vertex, label = above:{\color{T-Q-B3}{$Q_1$}}, line width = 1.2, fill = T-Q-B0]
	(n0) at (0, 2.3){};
\node[left = 0cm of n0] (l0) {\color{T-Q-B5}{\(P_1\)}};
\node[vertex, label = above:{\color{T-Q-B3}{$Q_2$}}, line width = 1.2, fill = T-Q-B0]
	(n7) at (1.15, 2.3){};
\node[vertex, label = above:{\color{T-Q-B3}{$Q_3$}}, line width = 1.2, fill = T-Q-B0]
	(n8) at (2.3, 2.3){};
\node[vertex, label = above:{\color{T-Q-B3}{$Q_4$}}, line width = 1.2, fill = T-Q-B0]
	(n9) at (3.45, 2.3){};
\node[vertex, label = left:{\color{T-Q-B5}{$P_2$}}, line width = 1.2, fill = T-Q-B0]
	(n10) at (0, 1.15){};
\node[vertex, line width = 1.2, fill = T-Q-B0]
	(n11) at (1.15, 1.15){};
\node[vertex, line width = 1.2, fill = T-Q-B0]
	(n1) at (2.3, 1.15){};
\node[vertex, line width = 1.2, fill = T-Q-B0]
	(n2) at (3.45, 1.15){};
\node[vertex, label = left:{\color{T-Q-B5}{$P_3$}}, line width = 1.2, fill = T-Q-B0]
	(n3) at (0, 0){};
\node[vertex, line width = 1.2, fill = T-Q-B0]
	(n4) at (1.15, 0){};
\node[vertex, line width = 1.2, fill = T-Q-B0]
	(n5) at (2.3, 0){};
\node[vertex, line width = 1.2, fill = T-Q-B0]
	(n6) at (3.45, 0){};
\path[directededge, line width = 0.8, draw = T-Q-B5]
	(n0) to (n7);
\path[directededge, line width = 0.8, draw = T-Q-B5]
	(n7) to (n8);
\path[directededge, line width = 0.8, draw = T-Q-B5]
	(n8) to (n9);
\path[directededge, line width = 0.8, draw = T-Q-B5]
	(n11) to (n1);
\path[directededge, line width = 0.8, draw = T-Q-B5]
	(n1) to (n2);
\path[directededge, line width = 0.8, draw = T-Q-B5]
	(n4) to (n5);
\path[directededge, line width = 0.8, draw = T-Q-B5]
	(n5) to (n6);
\path[directededge, line width = 0.8, draw = T-Q-B5]
	(n10) to (n11);
\path[directededge, line width = 0.8, draw = T-Q-B5]
	(n3) to (n4);
\path[directededge, line width = 0.8, draw = T-Q-B3]
	(n0) to (n10);
\path[directededge, line width = 0.8, draw = T-Q-B3]
	(n10) to (n3);
\path[directededge, line width = 0.8, draw = T-Q-B3]
	(n7) to (n11);
\path[directededge, line width = 0.8, draw = T-Q-B3]
	(n11) to (n4);
\path[directededge, line width = 0.8, draw = T-Q-B3]
	(n8) to (n1);
\path[directededge, line width = 0.8, draw = T-Q-B3]
	(n1) to (n5);
\path[directededge, line width = 0.8, draw = T-Q-B3]
	(n9) to (n2);
\path[directededge, line width = 0.8, draw = T-Q-B3]
	(n2) to (n6);

		\end{tikzpicture}
		\hspace{1cm}
		\begin{tikzpicture}
			\node[vertex, label = left:{\color{T-Q-B3}{$P_1$}}, line width = 1.2, fill = T-Q-B0]
	(n0) at (0, 2.3){};
\node[above = 0cm of n0] (l0) {\color{T-Q-B5}{\(Q_1\)}};
\node[vertex, label = above:{\color{T-Q-B5}{$Q_2$}}, line width = 1.2, fill = T-Q-B0]
	(n9) at (1.15, 2.3){};
\node[vertex, label = above:{\color{T-Q-B5}{$Q_3$}}, line width = 1.2, fill = T-Q-B0]
	(n10) at (3.45, 2.3){};
\node[vertex, label = above:{\color{T-Q-B5}{$Q_4$}}, line width = 1.2, fill = T-Q-B0]
	(n11) at (5.75, 2.3){};
\node[vertex, label = left:{\color{T-Q-B3}{$P_2$}}, line width = 1.2, fill = T-Q-B0]
	(n12) at (0, 1.15){};
\node[vertex, line width = 1.2, fill = T-Q-B0]
	(n13) at (1.15, 1.15){};
\node[vertex, line width = 1.2, fill = T-Q-B0]
	(n1) at (2.3, 1.15){};
\node[vertex, line width = 1.2, fill = T-Q-B0]
	(n2) at (3.45, 1.15){};
\node[vertex, line width = 1.2, fill = T-Q-B0]
	(n3) at (4.6, 1.15){};
\node[vertex, line width = 1.2, fill = T-Q-B0]
	(n4) at (5.75, 1.15){};
\node[vertex, label = left:{\color{T-Q-B3}{$P_3$}}, line width = 1.2, fill = T-Q-B0]
	(n5) at (0, 0){};
\node[vertex, line width = 1.2, fill = T-Q-B0]
	(n6) at (2.3, 0){};
\node[vertex, line width = 1.2, fill = T-Q-B0]
	(n7) at (4.6, 0){};
\node[vertex, line width = 1.2, fill = T-Q-B0]
	(n8) at (5.75, 0){};
\path[directededge, line width = 0.8, draw = T-Q-B3]
	(n0) to (n9);
\path[directededge, line width = 0.8, draw = T-Q-B3]
	(n9) to (n10);
\path[directededge, line width = 0.8, draw = T-Q-B3]
	(n10) to (n11);
\path[directededge, line width = 0.8, draw = T-Q-B3]
	(n6) to (n7);
\path[directededge, line width = 0.8, draw = T-Q-B3]
	(n7) to (n8);
\path[directededge, line width = 0.8, draw = T-Q-B3]
	(n12) to (n13);
\path[directededge, line width = 0.8, draw = T-Q-B3]
	(n5) to (n6);
\path[directededge, line width = 0.8, draw = T-Q-B5]
	(n0) to (n12);
\path[directededge, line width = 0.8, draw = T-Q-B5]
	(n12) to (n5);
\path[directededge, line width = 0.8, draw = T-Q-B5]
	(n9) to (n13);
\path[directededge, line width = 0.8, draw = T-Q-B5]
	(n10) to (n2);
\path[directededge, line width = 0.8, draw = T-Q-B5]
	(n11) to (n4);
\path[directededge, line width = 0.8, draw = T-Q-B5]
	(n4) to (n8);
\path[directededge, line width = 0.8, draw = T-Q-B5]
	(n1) to (n6);
\path[directededge, line width = 0.8, draw = T-Q-B0]
	(n13) to (n1);
\path[directededge, line width = 0.8, draw = T-Q-B3]
	(n1) to (n2);
\path[directededge, line width = 0.8, draw = T-Q-B0]
	(n2) to (n3);
\path[directededge, line width = 0.8, draw = T-Q-B5]
	(n3) to (n7);
\path[directededge, line width = 0.8, draw = T-Q-B3]
	(n3) to (n4);

		\end{tikzpicture}
		\caption{An acyclic \((3,4)\)-grid on the left and an acyclic \((3,4)\)-wall on the right.}
		\label{fig:grid-and-wall}
\end{figure}

\subparagraph{Immersions and minors}

    Let \(D\) and \(H\) be two digraphs.
    A weak immersion of \(H\) in \(D\)
    is a function \(\mu\) with domain \(\V{H} \cup \A{H}\) satisfying the following:
    \begin{enumerate}
        \item \(\mu\) maps \(\V{H}\) injectively into \(\V{D}\),
        \item \(\mu((u, v))\) is a directed path from \(\mu(u)\) to \(\mu(v)\) in \(D\) for every \((u, v) ∈ \A{H}\) and
        \item the directed paths \(\mu(e)\) and \(\mu(f)\) are edge disjoint for every pair of distinct edges \(e, f ∈ \A{H}\).
                    \end{enumerate}

    Let $H$ and $D$ be directed graphs.
    A \textbf{butterfly-model} of \(H\) in \(D\) is a function \(\mu\) which assigns to every \(x \in V(H) \cup \A{H}\) a subdigraph of \(D\)
		such that:
    \begin{enumerate}
        \item for every pair of distinct vertices \(u, v \in V(H)\), \(\mu(u)\) and \(\mu(v)\) are vertex-disjoint,
        \item for a vertex \(v \in V(H)\) and a non-incident edge \(e \in \A{H}\), \(\mu(v)\) and \(\mu(e)\) are vertex-disjoint,
                \item for every \(v \in V(H)\), \(\mu(v)\) is the union of an in-tree and an out-tree intersecting exactly on their common root, and
        \item for every \((u, v) \in \A{H}\), \(\mu((u, v))\) is a directed path starting at a vertex of the out-tree of \(\mu(u)\) and ending at a vertex of the in-tree of \(\mu(v)\).
    \end{enumerate}
    
		Additionally we define a \emph{center} function \(c_{\mu} \colon V(H) \to V(D)\) as follows.
    Let \(v \in V(H)\).
    As \(\mu(v)\) is an acyclic graph, it therefore admits a topological ordering.
    We let \(\Func{c_{\mu}}{v} \in V(\mu(v))\) be minimal in this ordering with the property that it can be reached by all sources of \(\mu(v)\) (note that this does not depend on the choice of the topological ordering).
    
    Note that \(\mu(v)\) being a union of an in-tree and an out-tree intersecting exactly on their common root implies that \(\Func{c_{\mu}}{v}\) can also reach all sinks of \(\mu(v)\). 

\subparagraph{Ear anonymity}

An \emph{ear} in a digraph \(D\) is a subgraph of \(D\) which is either a path or a cycle.
	Let $P$ be an ear.
	A sequence $\Brace{a_1, a_2, \dots, a_{k}}$ of arcs of $P$
	is an \emph{identifying sequence for} $P$
	if \(k \geq 1\) and
	every ear $Q$ containing
	$\Brace{a_1, a_2, \dots, a_{k}}$ in this order
	is a subgraph of $P$.

	Let $P = \Brace{a_1, a_2, \ldots, a_{k}}$ be a maximal ear in a digraph $D$, given by its arc-sequence (in the case of a cycle, any arc of $P$ can be chosen as $a_1$).
	The \emph{ear anonymity} of $P$ in $D$, denoted by $\Anon[D]{P}$, is the length of the shortest identifying sequence for $P$.
	If $k = 0$, we say that $\Anon[D]{P} = 0$.
	The ear anonymity of a digraph $D$, denoted by $\Anon{D}$, is the maximum ear anonymity of the maximal ears of $D$.	

\subparagraph{Complexity theory}
We refer the reader to \cite{CFKLMPPS15,DowneyF13}
for formal definitions of parameterized complexity concepts.

We skip a formal definition of \W/[1]-\emph{hardness} and
recall only the properties that we require for our results.
If a parameterized problem \(L_1\) is \W/[1]-hard with respect to \(k_1\)
and there is a reduction from \(L_1\) to \(L_2\) parameterized by \(k_2\) with running time at most \(f(k_1)n^{O(1)}\),
then \(L_2\) is \W/[1]-hard with respect to \(k_2\).
Under standard assumptions, 
if a problem is \W/[1]-hard with respect to some parameter \(k\),
then no algorithm with running time \(f(k)n^{O(1)}\) exists for said problem.

The \emph{Exponential Time Hypothesis} (ETH) states that \kSat{3} on \(n\) variables and \(m\) clauses
cannot be solved in \(2^{o(n)} \cdot (n + m)^{O(1)}\) time \cite{ImpagliazzoP2001,ImpagliazzoPZ2001}.

\subparagraph{Computational problems}
We recall the definitions of the following decision problems.

\kLinkageCongestionDef

\EdgeDisjointPathsCongestionDef

The special cases where the congestion \(g\) above is 1 are called
\kLinkage/ and \EdgeDisjointPaths/.
Our main reduction is from the following \W/[1]-hard problem.

\GridTilingDef

\begin{theorem}[{\cite[Lemma 1]{Marx12}}]
	\label{stat:grid-tiling-hard}
  \GridTiling/ is \W/[1]-hard with respect to \(k\).
	Furthermore, 
	unless the ETH fails,
	there is no \(f(k) n^{o(k)}\) time algorithm for \GridTiling/.
\end{theorem}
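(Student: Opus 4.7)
The plan is to reduce from the multicolored \(k\)-clique problem, which is well-known to be \W/[1]-hard with respect to \(k\) and to admit no \(f(k)n^{o(k)}\) time algorithm under the ETH. Given a graph \(G\) on \(n\) vertices partitioned into color classes \(V_1, \ldots, V_k\), I build a \(k \times k\) grid-tiling instance with ground set \([n] \times [n]\), identifying each \(V_i\) with \([n]\). For each diagonal cell \(S_{i,i}\), include the pair \((u, u)\) for every \(u \in V_i\); for each off-diagonal cell \(S_{i,j}\) with \(i \neq j\), include the pair \((u, v)\) for every edge \(uv\) of \(G\) with \(u \in V_i\) and \(v \in V_j\).

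Correctness would then follow by unpacking the consistency conditions. The horizontal constraint forces the first coordinate to be constant along every row \(i\), selecting a single vertex \(u_i \in V_i\); symmetrically, the vertical constraint selects \(v_j \in V_j\) along column \(j\). The diagonal pairs \((u, u) \in S_{i,i}\) force \(u_i = v_i\) for every \(i\), so exactly one vertex of each color class is chosen, and the existence of a valid pair in every off-diagonal cell \(S_{i,j}\) precisely asserts that \(u_i u_j\) is an edge of \(G\). Hence solutions of the grid-tiling instance correspond bijectively to multicolored cliques in \(G\).

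The construction is polynomial in \(n\) and \(k\) and the grid dimension is exactly \(k\), so \W/[1]-hardness transfers directly, and an \(f(k) n^{o(k)}\) algorithm for \GridTiling/ would yield one for multicolored \(k\)-clique of the same form, contradicting the ETH. There is no real obstacle here, as this is a textbook reduction; the only conceptual point is to recognize that the two coordinates of each cell can be used to decouple a row-variable (``which vertex of \(V_i\)'') from a column-variable (``which vertex of \(V_j\)''), with the diagonal cells acting as the unique gluing points that identify the two variables associated to the same color class.
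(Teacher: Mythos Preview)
The paper does not give its own proof of this statement: it is quoted as a black box from \cite[Lemma~1]{Marx12}, so there is nothing in the paper to compare your argument against. That said, your sketch is correct and is precisely the standard reduction used in the cited reference: encode a multicolored \(k\)-clique instance on colour classes \(V_1,\dots,V_k\) as a \(k\times k\) grid-tiling instance with \(S_{i,i}=\{(u,u):u\in V_i\}\) and \(S_{i,j}=\{(u,v):uv\in E(G),\,u\in V_i,\,v\in V_j\}\) for \(i\neq j\); row/column consistency then forces a single vertex per colour class and the off-diagonal cells enforce the clique edges. The parameter is preserved and the blow-up is polynomial, so both the \W/[1]-hardness and the ETH lower bound transfer as you state.
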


\section{Vertex disjoint paths}
\label{sec:vdp}

We provide a reduction from \GridTiling/ to \kLinkageCongestionShort/.
Our gadgets are inspired on the reduction due to \cite{slivkins2010parameterized}.

\subsection{Hardness reduction}
\label{sec:vdp-reduction}

\TaskMarcelo[done]{Update \cref{const:grid-tiling-no-grid} to allow congestion.}
\begin{construction}
	      \label{const:grid-tiling-no-grid}
		Let \((n, \{S_{i, j}\}_{i, j = 1}^k)\) be an instance of \GridTiling/.
		Let \(g \geq 1\) be an integer.
    We construct a \kLinkageCongestion/ instance \((D, T, g)\) as follows.

    We use two types of \emph{gadgets} in our construction: selector and verifier gadgets.
        For each row and each column, we have one selector gadget \(\SEL{r}{i}\) and \(\SEL{c}{i}\), respectively.
    For each \(\circ \in \Set{r,c}\) and each \(i \in [k]\),
    we construct \(\SEL{\circ}{i}\) as follows (see \cref{fig:sel}).
    
    \begin{figure}[htbp]
        \centering
        \includegraphics[width=0.9\textwidth]{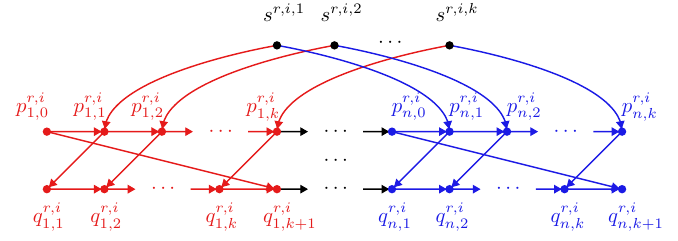}
        \caption{The selector gadget \(\SEL{r}{i}\). The colored subgraphs are repeating building blocks.}
        \label{fig:sel}
    \end{figure}

    For each \(j \in [k]\),
    add the source
    \(s^{\circ, i, j}\) if \(\circ = r\) and
    \(s^{\circ, j, i}\) if \(\circ = c\) instead.

    Define two paths \(P^{\circ, i}\) and \(Q^{\circ, i}\) by their subpaths as follows.
    For each \(1 \leq \ell \leq n\), 
    define
    \begin{align*}
        P^{\circ, i}_\ell = p^{\circ, i}_{\ell, 0}, \ldots, p^{\circ, i}_{\ell, k}
        && \text{ and }
        &&
        Q^{\circ, i}_\ell = q^{\circ, i}_{\ell, 1}, \ldots, q^{\circ, i}_{\ell, k + 1}.
    \end{align*}
    Then, 
    \begin{align*}
        P^{\circ, i} = P^{\circ, i}_1 \cdot \ldots \cdot P^{\circ, i}_n
        && \text{ and }
        && 
        Q^{\circ, i} = Q^{\circ, i}_1 \cdot \ldots \cdot Q^{\circ, i}_n.
    \end{align*}
    Connect the paths above as follows.
		For each \(\ell \in [n]\) and each \(j \in [k]\), 
    add the edges \((p^{\circ, i}_{\ell, 0}, q^{\circ, i}_{\ell, k+1})\) and $(p^{\circ, i}_{\ell, j}, q^{\circ, i}_{\ell, j})$.

    Add the terminal pairs
    \((p^{\circ, i}_{1,0}, p^{\circ, i}_{n, k})\) and
    \((q^{\circ, i}_{1,1}, q^{\circ, i}_{n, k+1})\)
    with multiplicity \(g - 1\) to \(T\).
    Add the terminal pair
    \((p^{\circ, i}_{1,0}, q^{\circ, i}_{n, k + 1})\)
    with multiplicity \(1\) to \(T\).

    This concludes the construction of the selector gadget.
    Vertices and edges of the selector can be summarized as follows.
		\begin{align*}
					V(\SEL{\circ}{i})
							&=
									V(P^{\circ, i}) \cup V(Q^{\circ, i}) \cup \CondSet{s^{\circ, i, j}}{j \in [k]}, \\
					\A{\SEL{\circ}{i}}
							&=
									\A{P^{\circ, i}} \cup \A{Q^{\circ, i}} \\
							&\cup \CondSet{(s^{\circ, i, j}, p^{\circ, i}_{\ell, j}), (p^{\circ, i}_{\ell, j}, q^{\circ, i}_{\ell, j})}{j \in [k], \ell \in [n]} \\
							&\cup \CondSet{(p^{\circ, i}_{\ell, 0}, q^{\circ, i}_{\ell, k+1})}{\ell \in [n]}.
		\end{align*}
    																										        
    Next, we construct the verifier gadgets \(\VER{i}{j}\) for \(1 \leq i, j \leq k\) (see \cref{fig:ver}).
    
    \begin{figure}[htbp]
        \centering
        \includegraphics[width=0.9\textwidth]{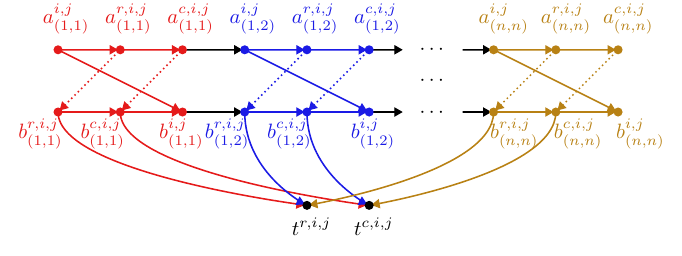}
        \caption{The verifier gadget \(\VER{i}{j}\). The colored subgraphs are repeating building blocks, where the existence of the dashed arrows depends on the set \(S_{i, j}\).}
        \label{fig:ver}
    \end{figure}
    
    First, add the targets \(t^{r, i, j}, t^{c, i, j}\).

    Define two paths \(A^{i, j}\) and \(B^{i,j}\) by their subpaths as follows.
    For \(d \in [n] \times [n]\), let
    \begin{align*}
        A^{i, j}_d = a^{i, j}_d \cdot a^{r, i, j}_d \cdot a^{c, i, j}_d
				&& \text{ and } &&
        B^{i, j}_d = b^{r, i, j}_d \cdot b^{c, i, j}_d \cdot b^{i, j}_d.
    \end{align*}
    For \(1 \leq x \leq n\), let
		\begin{align*}
        A^{i, j}_x = A^{i, j}_{(x, 1)} \cdot \ldots \cdot A^{i, j}_{(x, n)}
				&& \text{and} &&
        B^{i, j}_x = B^{i, j}_{(x, 1)} \cdot \ldots \cdot B^{i, j}_{(x, n)}.
		\end{align*}
    Now define
		\begin{align*}
        A^{i, j} = A^{i, j}_1 \cdot \ldots \cdot A^{i, j}_n
				&& \text{and} &&
        B^{i, j} = B^{i, j}_1 \cdot \ldots \cdot B^{i, j}_n.
		\end{align*}

    Connect \(A^{i,j}\) to \(B^{i,j}\) and \(B^{i,j}\) to \(t^{r,i,j}, t^{c,i,j}\) by adding,
    for each \(d \in S_{i,j}\), 
    the edges
    \((a^{i, j}_d, b^{i, j}_d)\),
    \((a^{r, i, j}_d, b^{r, i, j}_d)\),
    \((a^{c, i, j}_d, b^{c, i, j}_d)\),
    \((b^{r, i, j}_d, t^{r, i, j})\) and
    \((b^{c, i, j}_d, t^{c, i, j})\).

		Add the terminal pairs
		\((a^{i,j}_{(1,1)}, a^{c,i,j}_{(n,n)})\) and
		\((b^{r,i,j}_{(1,1)}, b^{i,j}_{(n,n)})\)
		with multiplicity \(g - 1\) to \(T\).
		Add the terminal pair \((a^{i,j}_{(1,1)}, b^{i,j}_{(n,n)})\)
		with multiplicity 1 to \(T\).
        
    This concludes the construction of the verifier gadgets.
    We can summarize the vertices and edges of the verifier as follows.
    				\begin{align*}
        V(\VER{i}{j})
            &=
                V(A^{i, j}) \cup V(B^{i, j}) \cup \Set{t^{r, i, j}, t^{c, i, j}}.\\
        \A{\VER{i}{j}}
            &=
                \A{A^{i, j}} \cup \A{B^{i, j}} \\
        &\cup \CondSet{\begin{array}{rcl}
             (a^{i, j}_d, b^{i, j}_d), & (a^{r, i, j}_d, b^{r, i, j}_d), & (a^{c, i, j}_d, b^{c, i, j}_d), \\
             & (b^{r, i, j}_d, t^{r, i, j}), &(b^{c, i, j}_d, t^{c, i, j})
        \end{array}}{d \in S_{i, j}}.
		\end{align*}
                                                            														            
    Finally, connect the selectors and verifiers via additional edges as follows.
    \[
        E' = \CondSet{(q^{r, i}_{\ell, j}, a^{r, i, j}_{(\ell, x)}), (q^{c, j}_{\ell, i}, a^{c, i, j}_{(x, \ell)})}{i, j \in [k] \text{ and } \ell, x \in [n]}
    \]

		This completes the construction of \(D\).
		The multiset of terminal pairs is given by
		\begin{align*}
			T & = 
				\CondSet{(s^{r, i, j}, t^{r, i, j}) : 1, (s^{c, i, j}, t^{c, i, j}) : 1, (a^{i, j}_{(1,1)}, b^{i, j}_{n,n}) : 1}{i, j \in [k]} \\
				& \cup \CondSet{(p^{r, i}_{1, 0}, q^{r, i}_{n, k + 1}) : 1, (p^{c, i}_{1,0}, q^{c, i}_{n, k + 1}) : 1}{i \in [k]} \\
				& \cup \CondSet{
					\begin{array}{rl}
						(p^{r,i}_{1, 0}, p^{r,i}_{n, k}) : g - 1,
						& (q^{r,i}_{1,1}, q^{r,i}_{n, k + 1}) : g - 1
						\\
						(p^{c,i}_{1, 0}, p^{c,i}_{n, k}) : g - 1,
						& (q^{c,i}_{1,1}, q^{c,i}_{n, k + 1}) : g - 1
					\end{array}}
				{i \in [k]}\\
				& \cup \CondSet{(a^{i,j}_{(1,1)}, a^{c,i,j}_{(n,n)}) : g - 1, (b^{r,i,j}_{(1,1)}, b^{i,j}_{(n,n)}) : g - 1}{i,j \in [k]}
		\end{align*}
																																		
    \begin{figure}[htbp]
        \centering
        \includegraphics[width=0.9\textwidth]{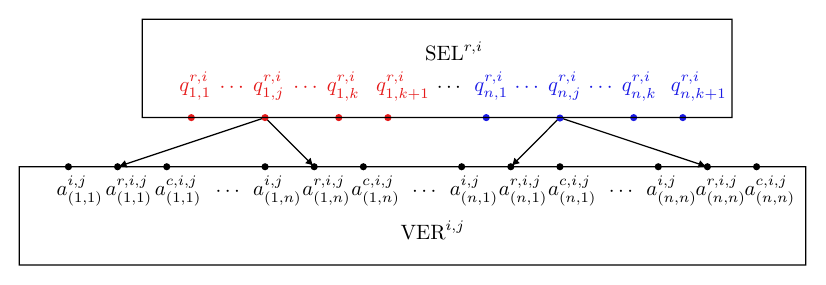}
        \caption{The edges between \(\SEL{r}{i}\) and \(\VER{i}{j}\). The colors in the selector again mark the repeating building blocks.}
        \label{fig:sel-ver}
    \end{figure}
    
    \begin{figure}[htbp]
        \centering
        \includegraphics[width=0.7\textwidth]{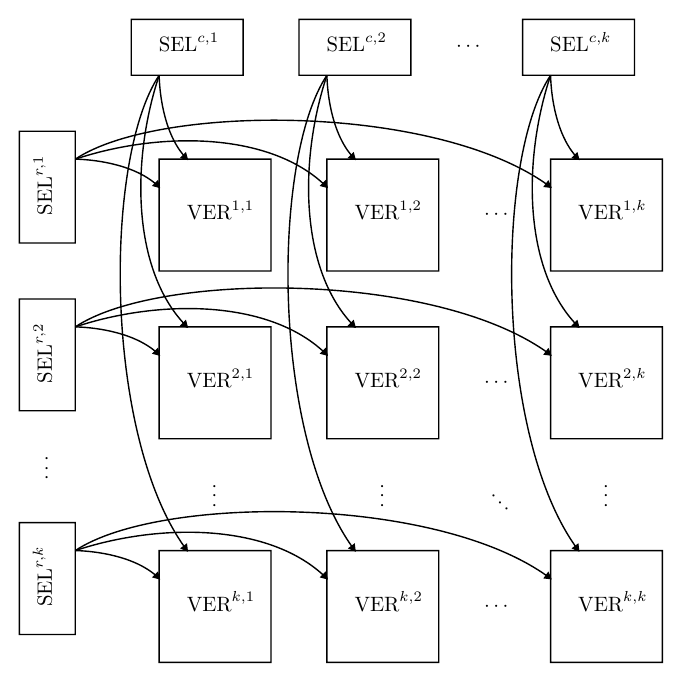}
        \caption{A simplified representation of all edges between selectors and verifiers. Each edge represents in fact multiple edges as shown in \cref{fig:sel-ver}.}
        \label{fig:grid}
    \end{figure}

    In total, \(D\) contains \(k\) selectors \(\SEL{r}{i}\) and
    \(k\) selectors \(\SEL{c}{i}\).
    Each selector contains \(k + 2g - 1\) sources.
    There are \(k^2\) verifier gadgets, each containing \(2g - 1\) sources.
    Hence, we have in total \(2k (k + 2g - 1) + k^2 (2g - 1) = 2g(k^2 + k) + k^2 - 2k\) terminal pairs in the constructed instance.
\end{construction}

Before analyzing the structural properties of the digraph constructed above,
we show that our reduction is correct.

\TaskMarcelo[done]{Update \cref{stat:hardness-correct} for \kLinkageCongestion/.}
\begin{lemma}
	\label{stat:hardness-correct}
	Let \(I_G \coloneqq (n, \{S_{i, j}\}_{i, j = 1}^k)\) be an instance of \GridTiling/.
	Let \(g \geq 1\) be an integer.
	Let \(I_L \coloneqq (D, T, g)\) be the \kLinkageCongestionShort/ instance obtained from \cref{const:grid-tiling-no-grid}.
	Then \(I_G\) is a \emph{yes} instance if, and only if, \(I_L\) is a \emph{yes} instance.
\end{lemma}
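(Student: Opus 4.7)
My plan is to prove the two directions of the equivalence separately, with the backward direction being substantially more involved.

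\emph{Forward direction.} Given grid tiling choices $(r_i, c_j) \in S_{i,j}$ for every $(i,j) \in [k] \times [k]$, I construct an explicit linkage solution. In each row selector $\SEL{r}{i}$ I route the $g-1$ copies of the p-to-p pair along the full spine $P^{r,i}$, the $g-1$ q-to-q copies along $Q^{r,i}$, and the lone p-to-q path along $P^{r,i}$ up to $p^{r,i}_{r_i, 0}$, then via the skip edge to $q^{r,i}_{r_i, k+1}$, then along $Q^{r,i}$ to its end; I proceed symmetrically in each column selector using segment $c_j$. In each verifier $\VER{i}{j}$ I route the $g-1$ a-to-a and $g-1$ b-to-b copies along $A^{i,j}$ and $B^{i,j}$ respectively, and the lone a-to-b path crossing at block $d = (r_i, c_j)$ via the edge $(a^{i,j}_d, b^{i,j}_d)$, which exists because $d \in S_{i,j}$. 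Each source path is then routed $s^{r,i,j} \to p^{r,i}_{r_i,j} \to q^{r,i}_{r_i,j} \to a^{r,i,j}_{(r_i, c_j)} \to b^{r,i,j}_{(r_i,c_j)} \to t^{r,i,j}$, and symmetrically for $s^{c,i,j}$. A direct tally per vertex confirms the total load is at most $g$ everywhere.

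\emph{Backward direction.} I start with a uniqueness observation: each of the $g-1$ copies of the endpoint pair anchored at $P^{r,i}$ (and analogously at $Q^{r,i}, A^{i,j}, B^{i,j}$) must traverse the full spine, since once a walk leaves $P^{r,i}$ via any cross or skip edge it lands on $Q^{r,i}$, and acyclicity together with the uniform $P \to Q$ orientation of cross edges prevents any return to $p^{r,i}_{n,k}$. This forces exactly $g-1$ units of congestion at every vertex of $P, Q, A, B$, leaving residual capacity $1$ for the remaining ``light'' pairs.

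The core is to show that in each row selector the p-to-q path must use a skip edge at a single segment $\ell^r_i$ through which all $k$ sources of that selector must route. I argue by contradiction: if the p-to-q path uses a regular cross edge $(p^{r,i}_{\ell_0, j_0}, q^{r,i}_{\ell_0, j_0})$ with $j_0 \in [k]$, it exhausts the residual on every vertex of $P^{r,i}_\ell$ for $\ell < \ell_0$ and of $Q^{r,i}_\ell$ for $\ell > \ell_0$, so no source can enter on $P$ or exit on $Q$ in those segments; inside segment $\ell_0$, the acyclic structure prevents any source from both entering at some $p^{r,i}_{\ell_0, j'}$ with $j' > j_0$ and later exiting at some $q^{r,i}_{\ell_0, j''}$ with $j'' < j_0$, blocking all $k$ source paths. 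Hence the p-to-q path takes a skip edge at a unique $\ell^r_i$, and a similar residual argument forces each source $s^{r,i,j}$ to use the local cross edge $(p^{r,i}_{\ell^r_i, j}, q^{r,i}_{\ell^r_i, j})$ and to exit to some $a^{r,i,j}_{(\ell^r_i, x_j)}$ in $\VER{i}{j}$. Symmetric reasoning in column selectors produces $\ell^c_j$ and yields an entry point $a^{c,i,j}_{(x'_i, \ell^c_j)}$ for $s^{c,i,j}$.

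To conclude, I analyze each verifier $\VER{i}{j}$: the a-to-b pair and the two sources $s^{r,i,j}, s^{c,i,j}$ must realize three crossings from $A^{i,j}$ to $B^{i,j}$ while consuming only $1$ unit of residual per vertex. A capacity analysis mirroring the selector case forces all three crossings to land in a single common block $d$, using the three distinct cross edges $(a^{i,j}_d, b^{i,j}_d)$, $(a^{r,i,j}_d, b^{r,i,j}_d)$, $(a^{c,i,j}_d, b^{c,i,j}_d)$, all of which require $d \in S_{i,j}$. Matching entry points, $s^{r,i,j}$ pins the first coordinate of $d$ to $\ell^r_i$ and $s^{c,i,j}$ pins the second to $\ell^c_j$, giving $d = (\ell^r_i, \ell^c_j) \in S_{i,j}$. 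Setting $r_i \coloneqq \ell^r_i$ and $c_j \coloneqq \ell^c_j$ yields the desired grid tiling solution. The main obstacle I foresee is the congestion bookkeeping inside both gadget types: one must rule out that light paths spread across several segments or exploit residual capacity in non-obvious ways, which will likely require several intermediate lemmas pinning down exactly which vertices each light path occupies.
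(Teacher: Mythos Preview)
Your proposal is correct and follows essentially the same approach as the paper: forced spine routing for the $g-1$ copies leaving residual capacity one on every vertex of $P,Q,A,B$, then a selector analysis pinning down a unique segment per selector, then an analogous verifier analysis pinning down a unique block $d\in S_{i,j}$. The only stylistic difference is that you argue by contradiction that the $p$-to-$q$ path cannot use a non-skip cross edge, whereas the paper argues directly from each source path $L^{r,i,j}$ that the $p$-to-$q$ path must take the skip edge at that source's segment (forcing all segments to coincide); both arguments are valid.
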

\begin{proof}
    
    \textbf{First direction:} If \(I_G\) is a \emph{yes} instance, then \(I_L\) is a \emph{yes} instance.

    Let \(a_1, a_2, \ldots a_k, b_1, b_2, \ldots, b_k \in [n]\) be a solution for \(I_G\).
    We construct a solution for \(I_L\) as follows.
		Recall from \cref{const:grid-tiling-no-grid} that the set of terminal pairs is given by
		\begin{align*}
				T^* & = 
				 \CondSet{
					\begin{array}{rl}
						(p^{r,i}_{1, 0}, p^{r,i}_{n, k}) : g - 1,
						& (q^{r,i}_{1,1}, q^{r,i}_{n, k + 1}) : g - 1
						\\
						(p^{c,i}_{1, 0}, p^{c,i}_{n, k}) : g - 1,
						& (q^{c,i}_{1,1}, q^{c,i}_{n, k + 1}) : g - 1
					\end{array}}
				{i \in [k]}\\
				& \cup  \CondSet{(a^{i,j}_{(1,1)}, a^{c,i,j}_{(n,n)}) : g - 1, (b^{r,i,j}_{(1,1)}, b^{i,j}_{(n,n)}) : g - 1}{i,j \in [k]},\\
			T & = 
				 \CondSet{(s^{r, i, j}, t^{r, i, j}) : 1, (s^{c, i, j}, t^{c, i, j}) : 1, (a^{i, j}_{(1,1)}, b^{i, j}_{n,n}) : 1}{i, j \in [k]} \\
				& \cup  \CondSet{(p^{r, i}_{1, 0}, q^{r, i}_{n, k + 1}) : 1, (p^{c, i}_{1,0}, q^{c, i}_{n, k + 1}) : 1}{i \in [k]}
				\cup  T^*.
				\end{align*}
																																						
		For each \(i, j \in [k]\) and each \(h \in [g - 1]\), we define paths connecting the terminal pairs.

		First, we consider the terminal pairs whose corresponding path does not depend on the solution for \(I_G\).
		
		For each \(\circ \in \Set{r, c}\), the pair \((p^{\circ,i}_{(0,1)}, p^{\circ,i}_{(n,k)})\) gets connected via \(g - 1\) copies of the path 
		\((p^{\circ,i}_{(1,0)}, p^{\circ,i}_{(1,1)}) \cdot P^{\circ,i} \cdot (p^{\circ,i}_{(n, k - 1)}, p^{\circ,i}_{(n, k)})\) and
		the pair \((q^{\circ,i}_{(1,1)}, q^{\circ,i}_{(n, k + 1)})\) gets connected via \(g - 1\) copies of the path 
		\((q^{\circ,i}_{(1,1)}, q^{\circ,i}_{(1,2)}) \cdot Q^{\circ,i} \cdot (q^{\circ,i}_{(n,k)}, q^{\circ,i}_{(n,k + 1)}).\)
		
		The pair \((a^{i,j}_{(1,1)}, a^{i,j,c}_{(n,n)})\) gets connected via \(g - 1\) copies of the path 
		\((a^{i,j}_{(1,1)}, a^{r,i,j}_{(1,1)}) \cdot A^{i,j} \cdot (a^{r,i,j}_{(n,n)}, a^{c,i,j}_{(n,n)})\) and
		the pair \((b^{i,j}_{(1,1)}, b^{i,j}_{(n,n)})\) gets connected via \(g - 1\) copies of the path 
		\((b^{r,i,j}_{(1,1)}, b^{c,i,j}_{(1,1)}) \cdot B^{i,j} \cdot (b^{c,i,j}_{(n,n)}, b^{i,j}_{(n,n)}).\)

		Let \(\mathcal{L}^*\) be the linkage connecting the terminal pairs above.
		Observe that \(\mathcal{L}^*\) provokes a congestion of \(g - 1\) on
		every vertex in \(P^{r,i}, P^{c,i}, Q^{r,i}, Q^{c,i}, A^{i,j}\) and \(B^{i,j}\).

		The solution for the remaining pairs will depend on the values \(a_i\) and \(b_j\) above.
		Recall that the remaining terminal pairs all have multiplicity 1 in \(T\).

    The terminal pair \((s^{r, i, j}, t^{r, i, j})\) gets connected via the path
    \[
        L^{r, i, j} \coloneq (s^{r, i, j}, p^{r, i}_{a_i, j}, q^{r, i}_{a_i, j}, a^{r, i, j}_{(a_i, b_j)}, b^{r, i, j}_{(a_i, b_j)}, t^{r, i, j}).
    \]

    The terminal pair \((s^{c, i, j}, t^{c, i, j})\) gets connected via the path
    \[
        L^{c, i, j} \coloneq (s^{c, i, j}, p^{c, j}_{b_j, i}, q^{c, j}_{b_j, i}, a^{c, i, j}_{(a_i, b_j)}, b^{c, i, j}_{(a_i, b_j)}, t^{c, i, j}).
    \]

    Observe that, because \((a_i, b_j) \in S_{i,j}\), we know that the edges
    \((a^{r, i, j}_{(a_i, b_j)}, b^{r, i, j}_{(a_i, b_j)})\) and
    \((a^{c, i, j}_{(a_i, b_j)}, b^{c, i, j}_{(a_i, b_j)})\) do in fact exist.

    The terminal pair \((a^{i, j}, b^{i, j})\) gets connected via the path
    \[
        L^{i, j} \coloneq (a^{i, j}_{(1, 1)}, a^{r,i, j}_{(1, 1)})
				\cdot A^{i, j}
				\cdot (a^{i, j}_{(a_i, b_j)}, b^{i, j}_{(a_i, b_j)})
				\cdot B^{i, j}
				\cdot (b^{c, i, j}_{(n ,n)}, b^{i, j}_{(n ,n)}).
    \]

    The terminal pair \((p^{r, i}, q^{r, i})\) gets connected via the path
    \[
        L^{r, i} \coloneq (p^{r, i}, p^{r, i}_{1, 0})
				\cdot P^{r, i} 
				\cdot (p^{r, i}_{a_i, 0}, q^{r, i}_{a_i, k+1})
				\cdot Q^{r, i}
				\cdot (q^{r, i}_{n, k+1}, q^{r, i}).
    \]

    The terminal pair \((p^{c, j}, q^{c, j})\) gets connected via the path
    \[
        L^{c, j} \coloneq (p^{c, j}, p^{c, j}_{1, 0})
				\cdot P^{c, j}
				\cdot (p^{c, j}_{b_j, 0}, q^{c, j}_{b_j, k+1})
				\cdot Q^{c, j}
				\cdot (q^{c, j}_{n, k+1}, q^{c, j}).
    \]

    These paths are pairwise vertex disjoint.
                Further, each path visits at most one vertex in each \(P^{r,i}, P^{c,i}, Q^{r,i}, Q^{c,i}, A^{i,j}\) and \(B^{i,j}\).
    Hence, together with the linkage \(\mathcal{L}^*\), we obtain congestion at most \(g\) in order to
    connect all terminal pairs of \(T\).

    \textbf{Second direction:} If \(I_L\) is a \emph{yes} instance, then \(I_G\) is a \emph{yes} instance.

    Let \(\mathcal{L}\) be a solution for \(I_L\).
		For each \(i, j \in [k]\),
		let
    \(L^{r, i, j}\),
    \(L^{c, i, j}\),
    \(L^{r, i}\),
    \(L^{c, j}\) and
    \(L^{i, j}\) 
    be the paths connecting the pairs
    \((s^{r, i, j}, t^{r, i, j})\),
    \((s^{c, i, j}, t^{c, i, j})\),
    \((p^{r, i}_{0,1}, q^{r, i}_{n, k + 1})\),
    \((p^{c, j}_{0,1}, q^{c, j}_{n, k + 1})\) and
    \((a^{i, j}_{(1,1)}, b^{i, j}_{(1,1)})\), respectively, and
    let \(\mathcal{L}' \subseteq \mathcal{L}\) be the set containing these paths.
    Let \(\mathcal{L}^* \subseteq \mathcal{L}\) be
    the multiset of paths connecting
    the pairs in \(T^*\).
    
    Observe that, for each \(L \in \mathcal{L}^*\), the path \(L\) is the unique path in \(D\) from
    \(\Start{L}\) to \(\End{L}\),
    as there are no paths from \(Q^{r,i}\) to \(P^{r,i}\) in the selector gadgets
    and also no path from \(B^{i,j}\) to \(A^{i,j}\) in the verifier gadgets.
    Hence, \(\mathcal{L}^*\) provokes a congestion of \(g - 1\) on
    every vertex in \(P^{r,i}, P^{c,i}, Q^{r,i}, Q^{c,i}, A^{i,j}\) and \(B^{i,j}\).

    As every non-terminal vertex inside \(\mathcal{L}'\) lies in
    \(P^{r,i}, P^{c,i}, Q^{r,i}, Q^{c,i}, A^{i,j}\) or \(B^{i,j}\),
    we deduce that the paths in \(\mathcal{L}'\) are pairwise vertex disjoint,
    as otherwise we would have congestion greater than \(g\) at some vertex.

    Let \(i \in [k]\) and
    consider the path \(L^{r,i} \in \mathcal{L}'\)
    connecting \((p^{r, i}_{0,1}, q^{r, i}_{n, k + 1})\).
    Note that the set of vertices which
    can reach \(q^{r, i}_{n, k + 1}\) and 
    are reachable from \(p^{r,i}_{0,1}\)
    is given by \(\Set{p^{r,i}_{0,1}, q^{r,i}_{n, k + 1}} \cup \V{P^{r,i}} \cup \V{Q^{r,i}}\).
    Hence, \(L^{r,i}\) lies inside \(\SEL{r}{i}\).
    Further, it must contain exactly one edge from \(P^{r,i}\) to \(Q^{r,i}\),
    as there are no edges from \(Q^{r,i}\) back to \(P^{r,i}\).

    Let \(v_p\) be the last vertex along \(L^{r,i}\)
    which lies on \(P^{r,i}\) and
    let \(v_q\) be the first vertex along \(L^{r,i}\)
    which lies on \(Q^{r,i}\).
    Observe that there is
    exactly one path from \(p^{r,i}_{0,1}\) to \(v_p\), which we call \(P'\), and
    exactly one path from \(v_q\) to \(q^{r,i}_{n, k + 1}\), which we call \(Q'\).
    Because of the linkage \(\mathcal{L}^*\),
    the congestion at \(P'\) and
    at \(Q'\) is equal to \(g\), and
    so no other path of \(\mathcal{L}'\) can intersect \(P'\) or \(Q'\).

    Let \(j \in [k]\) and let
    \(v_j\) be the first vertex of \(P^{r,i}\) along \(L^{r,i,j}\).
    By construction, \(v_j\) is necessarily of the form \(p^{r,i}_{a_j, j}\),
    for some \(a_j \in [n]\).
    As \(L^{r,i,j}\) is disjoint from \(P'\),
    the last vertex of \(P'\) must occur before \(p^{r,i}_{a_j,j}\) along \(P^{r,i}\).

    Similarly, let \(u_j\) be the first vertex of \(Q^{r,i}\) along \(L^{r,i,j}\).
    As \(L^{r,i,j}\) is disjoint from \(Q'\),
    the first vertex of \(Q'\) must occur after \(u_j\) along \(Q^{r,i}\).
    Additionally, \(u_j\) must occur at or after \(q^{r,i}_{a_j, j}\) along \(Q^{r,i}\),
    as earlier vertices are not reachable from \(v_j\).

    This implies that \(L^{r,i}\) must contain the edge
    \((p^{r, i}_{a_j, 0}, q^{r, i}_{a_j, k+1})\).
    As this holds for every \(j \in [k]\) and every \(L^{r,i,j}\),
    we have that \(a_j = a_{\ell}\) for all \(j, \ell \in [k]\).
    Let \(a_i \in [n]\) be the value equal to all \(a_j\).
    Also, for each \(j \in [k]\),
    the path \(L^{r,i,j}\) leaves the gadget \(\SEL{r}{i}\)
    at the vertex \(q^{r,i}_{a_i, j}\).

    By an analogous argument, we obtain a value \(b_j\) from the selector
    \(\SEL{c}{j}\) and conclude that,
    for every \(i \in [k]\),
    the path \(L^{c,i,j}\) leaves \(\SEL{c}{j}\) at the vertex
    \(q^{c,i,j}_{b_j, i}\).
    We claim that \(a_{1}, a_{2}, \ldots, a_{k}, b_{1}, b_{2}, \ldots, b_{k}\) is a solution for \(I_G\).

    Let \(i,j \in [k]\).
    We show that \((a_i, b_j) \in S_{i,j}\).
    As shown above, the path \(L^{r,i,j}\) enters
    \(\VER{i}{j}\) by an edge of the form
    \((q^{r,i}_{a_i, j}, a^{r,i,j}_{a_i, x})\), for some \(x \in [n]\).
    Analogously, the path \(L^{c,i,j}\) enters
    \(\VER{i}{j}\) by an edge of the form
    \((q^{c,j}_{b_j, i}, a^{c,i,j}_{\ell, b_j})\), for some \(\ell \in [n]\).

    By an analogous argument as before,
    we conclude that the path \(L^{i,j}\)
    must contain exactly one edge of the form
    \((a^{i,j}_{\ell', x'}, b^{i,j}_{\ell', x'})\),
    as otherwise it would intersect both \(L^{r,i,j}\) and \(L^{c,i,j}\),
    causing congestion greater than \(g\) at some vertex.
    In particular, this implies that \(\ell' = a_i = \ell\) and \(x' = b_j = x\).
    Hence, \(L^{r,i,j}\) must contain the edge
    \((a^{r,i,j}_{a_i, b_j}, b^{r,i,j}_{a_i, b_j})\) and
    \(L^{c,i,j}\) must contain the edge
    \((a^{c,i,j}_{a_i, b_j}, b^{c,i,j}_{a_i, b_j})\).
    This is only possible if \((a_i, b_j) \in S_{i,j}\), as otherwise these edges do not exist in \(D\).

    Hence, for all \(i,j \in [k]\), we have \((a_i, b_j) \in S_{i,j}\), and
    so \(I_G\) is a \emph{yes} instance. \end{proof}

\subsection{Excluding an acyclic \texorpdfstring{\((5, 5)\)}{(5, 5)}-grid}
\label{sec:vdp-no-grid}

\begin{lemma}
    \label{stat:no-grid-in-selector}
    \label{stat:no-grid-in-verifier}
		For each \(1 \leq i,j \leq k\), none of the gadgets
    \(\VER{i}{j}\), \(\SEL{r}{i}\) and \(\SEL{c}{i}\) contain any acyclic \((3, 3)\)-grid as a butterfly minor.
\end{lemma}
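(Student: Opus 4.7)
The plan is to prove the statement for the selector $\SEL{r}{i}$ in detail; $\SEL{c}{i}$ is identical after relabeling, and $\VER{i}{j}$ follows by a dual argument. Assume for contradiction that $\mu$ is a butterfly model of the acyclic $(3,3)$-grid in $\SEL{r}{i}$, and write $c_{rc} := \Func{c_{\mu}}{(r,c)}$ for the center of $\mu((r,c))$. The intuition is that $\SEL{r}{i}$ is essentially a ``two-path'' digraph (the backbones $P^{r,i}$ and $Q^{r,i}$ plus the sources), and cannot accommodate the 2-dimensional structure of the $(3,3)$-grid around the ``top-left'' square of grid vertices $(1,1), (1,2), (2,1), (2,2)$.

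The key step is to pin down the three centers $c_{12}, c_{21}, c_{22}$ on $P^{r,i}$. I would establish two placement constraints. \emph{(a)}~If $(r,c)$ has in-degree at least $1$ in the grid, then $c_{rc}$ is not a source of $\SEL{r}{i}$: the last edge of any incoming edge path $\mu((u,(r,c)))$ must end at a vertex with at least one in-edge in $D$. \emph{(b)}~If $(r,c)$ has out-degree at least $2$ in the grid, then $c_{rc}$ does not lie on $Q^{r,i}$: every $Q^{r,i}$-vertex has $D$-out-degree at most $1$, so any out-tree of $\mu((r,c))$ rooted at such a vertex is a sub-path of $Q^{r,i}$; its unique leaf has at most one free out-edge, hence cannot start two edge-disjoint outgoing edge paths, and non-leaves of the out-tree have their unique out-edge already used as a tree edge. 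Since the grid vertices $(1,2), (2,1), (2,2)$ all have in-degree at least $1$ and out-degree $2$, (a) and (b) force $c_{12}, c_{21}, c_{22} \in V(P^{r,i})$.

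Next I would use that $\SEL{r}{i}$ has no arcs from $Q^{r,i}$ to $P^{r,i}$ and no arcs into any source, so once a directed path leaves $P^{r,i}$ it cannot return; hence any directed path in $\SEL{r}{i}$ between two vertices of $P^{r,i}$ is a sub-path of $P^{r,i}$. For each grid arc $(u, v)$, concatenating the directed path in the out-tree of $\mu(u)$ from $c_\mu(u)$ to the start of $\mu((u,v))$, the edge path $\mu((u,v))$ itself, and the directed path in the in-tree of $\mu(v)$ from the end of $\mu((u,v))$ to $c_\mu(v)$ yields a directed $c_\mu(u)$-$c_\mu(v)$-path in $D$ contained in $\mu(u) \cup \mu((u,v)) \cup \mu(v)$. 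Applied to $(1,2) \to (2,2)$ and $(2,1) \to (2,2)$, this produces sub-paths $\pi_1, \pi_2$ of $P^{r,i}$ ending at $c_{22}$ and starting at $c_{12}, c_{21}$, respectively. Since $c_{22}$ is reachable from both $c_{12}$ and $c_{21}$, both precede $c_{22}$ on $P^{r,i}$; without loss of generality $c_{12}$ precedes $c_{21}$. Then $c_{21}$ is an internal vertex of $\pi_1 \subseteq \mu((1,2)) \cup \mu((1,2) \to (2,2)) \cup \mu((2,2))$, which is vertex-disjoint from $\mu((2,1)) \ni c_{21}$ by the butterfly-model axioms---a contradiction.

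For $\VER{i}{j}$ I would apply the dual argument at the ``bottom-right'' square using the grid vertices $(2,2), (2,3), (3,2)$, which all have in-degree $2$ and include $(2,2)$ of out-degree $2$. The duals of (a) and (b) (every $A^{i,j}$-vertex has $D$-in-degree at most $1$, and the sinks $t^{r,i,j}, t^{c,i,j}$ cannot carry the two outgoing edges of $(2,2)$) force $c_{22}, c_{23}, c_{32}$ onto $B^{i,j}$. Every directed $B^{i,j}$-to-$B^{i,j}$ path stays on $B^{i,j}$ because target-edges lead to sinks, so the model yields sub-paths of $B^{i,j}$ from $c_{22}$ to $c_{23}$ and to $c_{32}$; the later of $c_{23}, c_{32}$ on $B^{i,j}$ is reached from $c_{22}$ via a sub-path containing the earlier one, yielding the same kind of contradiction. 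The main obstacle will be a careful proof of (b) and its dual: one has to rule out all ways an out-tree rooted at a $D$-out-degree-$1$ vertex could serve two distinct outgoing edge paths to centers that are not sinks, even if the out-tree extends further along the backbone.
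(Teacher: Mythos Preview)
Your overall strategy is sound and the final ``three centers on one backbone path'' contradiction is exactly what the paper uses. The difference is in how you arrive at three centers on the same path. The paper does \emph{not} prove your constraint (b); instead it simply case-splits on whether the center of the grid vertex \(v=(2,2)\) lies on \(P^{r,i}\) or on \(Q^{r,i}\). In the first case it looks at the two in-neighbours of \(v\) (their centers must lie on \(P^{r,i}\) because only sources and \(P^{r,i}\)-vertices can reach \(P^{r,i}\), and neither in-neighbour is the grid source), and in the second case it looks at the two out-neighbours of \(v\) (their centers must lie on \(Q^{r,i}\) because from \(Q^{r,i}\) one can only reach \(Q^{r,i}\)). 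Either way one gets three centers on a single backbone path and the same contradiction you derive. The verifier case is handled symmetrically.

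Your route via (b) is not wrong in conclusion, but the justification you sketch has a genuine gap: you argue that the unique leaf of the out-tree ``cannot start two edge-disjoint outgoing edge paths'', yet the butterfly-model axioms in this paper impose no disjointness between two edge paths \(\mu(e_1)\) and \(\mu(e_2)\); the only vertex-disjointness is between \(\mu(w)\) and \(\mu(e)\) for non-incident \(w,e\). So two outgoing edge paths may share their first edge, and your degree count alone does not finish the job. The clean fix is precisely the paper's reachability observation: if \(c_{rc}\in V(Q^{r,i})\), then every vertex reachable from \(c_{rc}\) in \(\SEL{r}{i}\) lies on \(Q^{r,i}\); hence the centers of both out-neighbours of \((r,c)\) lie on \(Q^{r,i}\), and you already have three centers on \(Q^{r,i}\). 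This both proves (b) and makes it unnecessary. The same remark applies to your dual of (b) for \(\VER{i}{j}\): rather than counting in-degrees on \(A^{i,j}\), note that any vertex that can reach an \(A^{i,j}\)-vertex inside \(\VER{i}{j}\) is itself on \(A^{i,j}\).
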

	
\begin{proof}
		We prove the statement for \(\SEL{r}{i}\).
		The argument for the other gadgets follow analogously.

    We assume towards a contradiction that \(\SEL{r}{i}\) contains an acyclic \((3, 3)\)-grid \(G\) as a butterfly minor.
    Let \(\mu\) be the model of \(G\) in \(\SEL{r}{i}\).

    Let \(s\) be the vertex of \(G\) with in-degree \(0\).
    The indegree of every \(x \in V(G)\) with \(x \neq s\) is at least \(1\).
    Therefore, \(\Func{c_{\mu}}{x} \neq s^{r, i, j}\) for all \(j \in [k]\), so \(\Func{c_{\mu}}{x} \in V(P^{r, i}) \cup V(Q^{r, i})\).
        
    Next, let \(v \in V(G)\) be the vertex with in- and out-degree \(2\).
    Since \(v \neq s\), we have \(\Func{c_{\mu}}{v} \in V(P^{r, i}) \cup V(Q^{r, i})\).
    
    Assume towards a contradiction that \(\Func{c_{\mu}}{v} \in V(P^{r, i})\).
    Let \(u, w \in V(G)\) be the two in-neighbours of \(v\).
    Then
    \begin{align*}
        \Func{c_{\mu}}{u}, \Func{c_{\mu}}{w}
        &\in \CondSet{x \in V(\SEL{r}{i})}{\text{There exists a path from } x \text{ to } \Func{c_{\mu}}{v}} \\
        &= \CondSet{s^{r, i, j}}{j \in [n]} \cup V(P^{r, i}).
    \end{align*}
    As \(u, w \neq s\), we have \(\Func{c_{\mu}}{u}, \Func{c_{\mu}}{w} \in V(P^{r, i})\).
    Without loss of generality, \(\Func{c_{\mu}}{u}, \Func{c_{\mu}}{w}\) and \(\Func{c_{\mu}}{v}\) appear in this order on \(P^{r, i}\).
    Then every path from \(\Func{c_{\mu}}{u}\) to \(\Func{c_{\mu}}{v}\) contains \(\Func{c_{\mu}}{w}\).
    Thus, \(\mu(w)\) is not disjoint from \(\mu(u) \cup \mu((u, v)) \cup \mu(v)\).
    A contradiction.
    
    Hence, \(\Func{c_{\mu}}{v} \in V(Q^{r, i})\).
    Let \(u', w' \in V(G)\) be the two out-neighbours of \(v\).
    As before, \(\Func{c_{\mu}}{u'}, \Func{c_{\mu}}{w'} \in V(Q^{r, i})\).
		We again get a contradiction as there are no two internally disjoint paths from \(\Func{c_{\mu}}{v}\) to \(\Func{c_{\mu}}{u'}\) and \(\Func{c_{\mu}}{w'}\).
\end{proof}

The following observation can easily be verified from the construction and will be useful in subsequent proofs.

\begin{observation}
    \label{stat:gadget-connectivity}
    Let \((n, \{S_{i, j}\}_{i, j = 1}^k)\) be a \GridTiling/ instance.
		Let \((D, T)\) be the instance given by \cref{const:grid-tiling-no-grid}.
    For all \(i, i', j, j' \in [k]\) and all \(\circ \in \Set{r, c}\), the following holds
    \begin{enumerate}
        \item there are no edges leaving \(V(\VER{i}{j})\),
        \item there are no edges coming into \(V(\SEL{\circ}{i})\),
        \item if there is an edge from \(V(\SEL{r}{i'})\) to \(V(\VER{i}{j})\), then \(i' = i\), and
        \item if there is an edge from \(V(\SEL{c}{j'})\) to \(V(\VER{i}{j})\), then \(j' = j\).
    \end{enumerate}
\end{observation}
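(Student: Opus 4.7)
The plan is to prove each of the four items by direct inspection of the edge set of \(D\) as given in \cref{const:grid-tiling-no-grid}. Every arc of \(D\) belongs to exactly one of three families: arcs internal to some selector \(\SEL{\circ}{i}\), arcs internal to some verifier \(\VER{i}{j}\), or arcs of the interconnection set \(E'\). For each claim I will enumerate the families that could possibly contribute a relevant arc and read off where its endpoints lie.

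For item~1, the arcs internal to \(\VER{i}{j}\) (those of \(A^{i,j}\), those of \(B^{i,j}\), and the crossing arcs \((a^{\cdot}_d, b^{\cdot}_d)\), \((a^{r,i,j}_d, b^{r,i,j}_d)\), \((a^{c,i,j}_d, b^{c,i,j}_d)\), \((b^{r,i,j}_d, t^{r,i,j})\), \((b^{c,i,j}_d, t^{c,i,j})\) for \(d \in S_{i,j}\)) all have both endpoints in \(V(\VER{i}{j})\); the arcs of \(E'\) have their tails in some \(q\)-vertex of a selector, so none has its tail in a verifier; and arcs internal to a selector stay inside that selector. Item~2 is symmetric: internal selector arcs stay in the selector, internal verifier arcs stay in the verifier, and every \(E'\)-arc has its head in a verifier \(a\)-vertex, so no arc enters a selector from outside.

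For items~3 and~4 I will pinpoint the relevant \(E'\)-arcs directly from its definition. The arcs of \(E'\) with tail in \(V(\SEL{r}{i'})\) are exactly those of the form \((q^{r,i'}_{\ell,j''}, a^{r,i',j''}_{(\ell,x)})\) for some \(j'' \in [k]\) and \(\ell, x \in [n]\); for such an arc to have head in \(V(\VER{i}{j})\), the two superscripts of the head vertex must match the verifier, forcing \(i' = i\) (and in fact also \(j'' = j\)). The case of \(\SEL{c}{j'}\) is analogous, using that the relevant \(E'\)-arcs are \((q^{c,j'}_{\ell,i''}, a^{c,i'',j'}_{(x,\ell)})\), so the column superscript \(j'\) of the tail must equal the column index \(j\) of the verifier. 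The proof is entirely bookkeeping; there is no conceptual obstacle, and the only place one can slip is in the indexing convention for column selectors, where the \(i\) in \(\SEL{c}{i}\) plays the role of a column rather than a row, so the tails of \(E'\)-arcs leaving a column selector have their row/column superscripts swapped relative to the row-selector case.
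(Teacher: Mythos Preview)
Your proposal is correct and is precisely the kind of direct verification the paper intends: the paper states this as an observation that ``can easily be verified from the construction'' and gives no proof at all, so your bookkeeping argument over the three edge families (internal selector arcs, internal verifier arcs, and the interconnection set \(E'\)) is exactly what is needed. Your caution about the swapped indexing for column selectors is well placed and handled correctly.
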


\begin{corollary}
    \label{stat:butterfly-minor-restriction}
    Let \((n, \{S_{i, j}\}_{i, j = 1}^k)\) be an instance of \GridTiling/.
    Let \((D, T)\) be the reduced instance constructed by \cref{const:grid-tiling-no-grid}.
    Let \(H\) be an acyclic digraph with exactly one source and exactly one sink.
    If \(H\) is a butterfly minor of \(D\) with model \(\mu\), then there exist \(i, j \in [n]\) such that \(\Image(\mu) \subseteq D[V(\SEL{r}{i}) \cup V(\VER{i}{j})]\) or \(\Image(\mu) \subseteq D[V(\SEL{c}{j}) \cup V(\VER{i}{j})]\).
\end{corollary}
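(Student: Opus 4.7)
The plan is a case analysis on the gadgets of \(D\) containing the centers \(c_\mu(s_H)\) and \(c_\mu(t_H)\) of the unique source \(s_H\) and unique sink \(t_H\) of \(H\), together with the reachability constraints provided by \Cref{stat:gadget-connectivity}. Since every vertex of \(H\) lies on a directed \(s_H\)-to-\(t_H\) path, \(H\) is weakly connected. For each edge \((u, v) \in A(H)\) the butterfly model yields a directed walk in \(D\) from \(c_\mu(u)\) to \(c_\mu(v)\): traverse the out-tree of \(\mu(u)\) from its root to the start of \(\mu((u,v))\), follow \(\mu((u,v))\), then traverse the in-tree of \(\mu(v)\) from the endpoint to \(c_\mu(v)\). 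Concatenating along paths of \(H\) gives directed walks \(c_\mu(s_H) \rightsquigarrow c_\mu(v) \rightsquigarrow c_\mu(t_H)\) in \(D\) for every \(v \in V(H)\).

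By \Cref{stat:gadget-connectivity}, the vertices reachable from \(V(\SEL{r}{i})\) lie in \(V(\SEL{r}{i}) \cup \bigcup_{j'} V(\VER{i}{j'})\), those reachable from \(V(\SEL{c}{j})\) in \(V(\SEL{c}{j}) \cup \bigcup_{i'} V(\VER{i'}{j})\), and those reachable from \(V(\VER{i}{j})\) only in \(V(\VER{i}{j})\); dually, \(V(\VER{i}{j})\) is reached only from \(V(\SEL{r}{i}) \cup V(\SEL{c}{j}) \cup V(\VER{i}{j})\), and each \(V(\SEL{\circ}{i})\) only from itself. Combining the reachability constraint out of \(c_\mu(s_H)\) with the reach-to constraint into \(c_\mu(t_H)\), a short case analysis on the gadgets of \(c_\mu(s_H)\) and \(c_\mu(t_H)\) shows that there exist \(i, j\) such that either every \(c_\mu(v) \in V(\SEL{r}{i}) \cup V(\VER{i}{j})\), or every \(c_\mu(v) \in V(\SEL{c}{j}) \cup V(\VER{i}{j})\).

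It remains to extend the conclusion from centers to the whole image. We may assume, by pruning, that \(\mu\) is minimal, in the sense that every leaf of every in-tree (resp.\ out-tree) is the endpoint (resp.\ startpoint) of some edge-path \(\mu(e)\). Say all centers lie in \(X \coloneq V(\SEL{r}{i}) \cup V(\VER{i}{j})\); the other case is symmetric. The out-tree of each \(\mu(v)\) sits in the reachability cone of \(c_\mu(v) \in X\). If some out-tree leaf lay in \(V(\VER{i}{j'})\) with \(j' \neq j\), this leaf would be the start of an outgoing edge path \(\mu((v, w))\); since \(V(\VER{i}{j'})\) reaches only itself and no \(c_\mu(w)\) lies in \(V(\VER{i}{j'})\), no in-tree of any \(\mu(w)\) could receive this path, a contradiction. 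A dual argument confines in-trees to \(X\), and each \(\mu(e)\) then lies in \(X\) as a path between an out-tree sink and an in-tree source both in \(X\). The main obstacle is precisely this last step: the minimality of \(\mu\) combined with the rigid one-way flow between gadgets from \Cref{stat:gadget-connectivity} is what forces every tree branch, and the edge paths bridging them, to remain in the confining set \(X\).
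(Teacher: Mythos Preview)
The paper states this as an immediate corollary of \Cref{stat:gadget-connectivity} and gives no explicit proof; your argument---locating all centers via the intersection of the forward reachability cone of \(c_\mu(s_H)\) with the backward reachability cone of \(c_\mu(t_H)\), and then extending to the full image using minimality---is exactly the intended unpacking, and it is correct.

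One point is worth flagging: as literally stated (for an \emph{arbitrary} given model \(\mu\)), the corollary is not quite true, and your passage to a minimal model is a genuine change of hypothesis rather than a harmless ``without loss of generality''. A non-minimal out-tree rooted at some \(c_\mu(v) \in V(\SEL{r}{i})\) may carry a stray branch into \(V(\VER{i}{j'})\) for some \(j' \neq j\) that is not the start of any edge-path \(\mu(e)\); then \(\Image(\mu)\) is not contained in \(V(\SEL{r}{i}) \cup V(\VER{i}{j})\) for any single \(j\). Your pruning therefore proves the correct (and sufficient) statement---that there \emph{exists} a model with image in the confining set---which is all the downstream applications (\Cref{stat:no-large-grid} and \Cref{stat:no-tournament}) actually need, since there one may choose the model freely. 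This is a subtlety the paper glosses over, so you are right to make it explicit; just be clear that you are slightly weakening the claim, not assuming something for free about the given \(\mu\).
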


We can now exclude an acyclic \((5, 5)\)-grid by arguing that
such a grid in the constructed digraph would imply
the existence of an \((3, 3)\)-grid in one of the selectors or one of the verifiers.
\begin{lemma}
    \label{stat:no-large-grid}
    Let \((n, \{S_{i, j}\}_{i, j = 1}^k)\) be an instance of \GridTiling/.
		Let \((D, T)\) be the reduced instance constructed by \cref{const:grid-tiling-no-grid}.
    Then \(D\) does not contain any acyclic \((5, 5)\)-grid as a butterfly minor.
\end{lemma}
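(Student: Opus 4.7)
The plan is to apply \cref{stat:butterfly-minor-restriction} to confine a hypothetical butterfly model of the acyclic \((5,5)\)-grid to a single selector-verifier pair, and then use the structure of the edges between selector and verifier to extract a \((3,3)\)-subgrid of the model living entirely inside one of the two gadgets, contradicting \cref{stat:no-grid-in-selector}.

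Assume towards a contradiction that \(\mu\) is a butterfly model of the acyclic \((5,5)\)-grid \(G\) in \(D\). Since \(G\) has a unique source \((1,1)\) and a unique sink \((5,5)\), \cref{stat:butterfly-minor-restriction} yields indices \(i, j \in [k]\) and a choice \(\SEL \in \Set{\SEL{r}{i}, \SEL{c}{j}}\) such that \(\Image(\mu) \subseteq D[V(\SEL) \cup V(\VER{i}{j})]\). I would partition \(V(G)\) into \(V_\SEL \coloneq \CondSet{v \in V(G)}{c_\mu(v) \in V(\SEL)}\) and its complement \(V_\VER\).

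The key structural step is to show that \(V_\SEL\) is a down-set in the product order on \([5] \times [5]\). By \cref{stat:gadget-connectivity}, no edges leave \(\VER{i}{j}\) and no edges enter \(\SEL\), which implies that the in-tree of \(\mu(v)\) lies entirely in \(V(\SEL)\) whenever \(c_\mu(v) \in V(\SEL)\), and dually that the out-tree of \(\mu(v)\) lies entirely in \(V(\VER{i}{j})\) whenever \(c_\mu(v) \in V(\VER{i}{j})\). For any edge \((u, v) \in E(G)\) with \(c_\mu(u) \in V(\VER{i}{j})\), the path \(\mu((u,v))\) starts in the out-tree of \(\mu(u)\), hence inside \(\VER{i}{j}\), cannot return to \(V(\SEL)\), and must terminate in the in-tree of \(\mu(v)\); this in-tree therefore meets \(V(\VER{i}{j})\), which forces \(c_\mu(v) \in V(\VER{i}{j})\). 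Thus \(V_\SEL\) is closed under predecessors in \(G\), as required.

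A pigeonhole on the vertex \((3,3)\) then finishes the argument: if \((3,3) \in V_\SEL\), the down-set property yields \(\CondSet{(r,c)}{r, c \leq 3} \subseteq V_\SEL\), while if \((3,3) \in V_\VER\), the dual up-set property yields \(\CondSet{(r,c)}{r, c \geq 3} \subseteq V_\VER\). Either way I obtain a \((3,3)\)-subgrid \(G'\) of \(G\) whose vertices all have centers in the same gadget, and restricting \(\mu\) via \(\mu'(v) \coloneq \mu(v) \cap V(\SEL)\) (or \(\mu(v) \cap V(\VER{i}{j})\)) and \(\mu'(e) \coloneq \mu(e)\) will produce a butterfly model of the acyclic \((3,3)\)-grid inside the chosen gadget, contradicting \cref{stat:no-grid-in-selector}. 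The main, though routine, obstacle will be verifying that this restriction remains a valid butterfly model: each \(\mu'(v)\) must still be a union of an in-tree and an out-tree meeting at \(c_\mu(v)\), and each path \(\mu'(e)\) must stay inside the selected gadget; both of these will follow from the one-way direction of the edges between selectors and verifiers given by \cref{stat:gadget-connectivity}.
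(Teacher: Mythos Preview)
Your proposal is correct and follows essentially the same approach as the paper's proof: confine the model to one selector--verifier pair via \cref{stat:butterfly-minor-restriction}, observe that the partition by location of centers is a down-set/up-set split because no edges run from verifier back to selector (\cref{stat:gadget-connectivity}), pivot on the central vertex \((3,3)\), and restrict to a \((3,3)\)-subgrid inside a single gadget to contradict \cref{stat:no-grid-in-selector}. Your write-up is in fact slightly more careful than the paper's about why the restricted model is still a valid butterfly model.
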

\begin{proof}
    We assume towards a contradiction that \(D\) contains an acyclic \((5, 5)\)-grid \(G\) as a butterfly minor.
    Let \(\mu\) be the model of \(G\) in \(D\).

    \(G\) has exactly one source and one sink.
    By \cref{stat:butterfly-minor-restriction}, there exist \(i, j \in [n]\) such that \(\Image(\mu) \subseteq D[V(\SEL{r}{i}) \cup V(\VER{i}{j})]\) or \(\Image(\mu) \subseteq D[V(\SEL{c}{j}) \cup V(\VER{i}{j})]\).
    Without loss of generality we assume that \(\Image(\mu) \subseteq D[V(\SEL{r}{i}) \cup V(\VER{i}{j})]\).

    Define a partition of the vertices of \(G\) as follows
    \begin{align*}
			A &\coloneq \CondSet{v \in V(G)}{\Func{c_{\mu}}{v} \in V(\SEL{r}{i})}, \\
      B &\coloneq \CondSet{v \in V(G)}{\Func{c_{\mu}}{v} \in V(\VER{i}{j})}.
    \end{align*}

    By \cref{stat:gadget-connectivity}, there is no edge from \(B\) to \(A\), as otherwise the model would give a path from \(\Func{c_{\mu}}{b} \in V(\VER{i}{j})\) to \(\Func{c_{\mu}}{a} \in V(\SEL{r}{i})\) for some \(b \in B\) and \(a \in A\).
    
    Let \(v\) be the vertex in the center of the acyclic \((5, 5)\)-grid \(G\).

    \begin{figure}[htbp]
        \centering
        \includegraphics[width=0.4\textwidth]{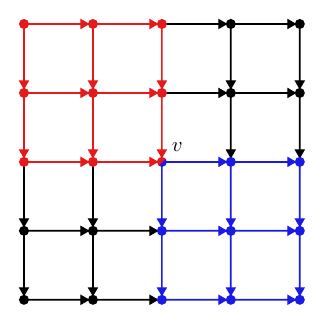}
        \caption{The acyclic \((5, 5)\)-grid \(G\), \(v \in V(G)\) and two acyclic \((3, 3)\)-subgrids (red and blue).}
        \label{fig:5x5}
    \end{figure}
    
    If \(v \in A\), then the top left acyclic \((3, 3)\)-subgrid (see the red colored subgraph in \cref{fig:5x5}) is also in \(A\) as there is no edge from \(B\) to \(A\).
    Since there is no edge from \(\VER{i}{j}\) to \(\SEL{r}{i}\), restricting the domain of \(\mu\) to this acyclic \((3, 3)\)-grid and its codomain to \(\SEL{r}{i}\) would give a model for the acyclic \((3, 3)\)-grid as a butterfly minor in \(\SEL{r}{i}\).
    But by \cref{stat:no-grid-in-verifier}, \(\SEL{r}{i}\) does not contain any acyclic \((3, 3)\)-grid as a butterfly minor.
    Similarly, if \(v \in B\), then the bottom right acyclic \((3, 3)\)-subgrid (see the blue colored subgraph in \cref{fig:5x5}) is also in \(B\).
		We would get an acyclic \((3, 3)\)-grid as a butterfly minor in \(\VER{i}{j}\).
    But by \cref{stat:no-grid-in-selector}, \(\VER{i}{j}\) does not contain any acyclic \((3, 3)\)-grid as a butterfly minor.
    Hence, we obtain a contradiction.
\end{proof}

\subsection{Excluding a \texorpdfstring{\(\TT{9}\)}{K9}}
\label{sec:vdp-no-tt}

\begin{lemma}
    \label{stat:no-tournament-in-selector}
    \label{stat:no-tournament-in-verifier}
    \label{stat:no-tournament-in-gadgets}
		For each \(1 \leq i,j \leq k\), none of the gadgets
    \(\VER{i}{j}\), \(\SEL{r}{i}\) and \(\SEL{c}{i}\)
    contain any \(\TT{6}\) as a butterfly minor.
\end{lemma}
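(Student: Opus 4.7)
The plan is to extend the argument used to exclude the acyclic \((3,3)\)-grid in \cref{stat:no-grid-in-selector} via a straightforward pigeonhole argument. Fix a gadget \(G \in \Set{\SEL{r}{i}, \SEL{c}{i}, \VER{i}{j}}\); its backbone consists of two parallel directed paths — \(P^{r,i}\) and \(Q^{r,i}\) for the selectors, \(A^{i,j}\) and \(B^{i,j}\) for the verifier — joined exclusively by cross edges directed from the first path to the second, while its non-backbone vertices form a short explicit set: the sources \(\Set{s^{r,i,j}}_{j \in [k]}\) in the selector and the sinks \(\Set{t^{r,i,j}, t^{c,i,j}}\) in the verifier.

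The argument relies on two building blocks. \textbf{Source/sink rigidity:} if \(\Func{c_{\mu}}{v_k}\) is a source of \(G\), then the in-tree of \(\mu(v_k)\) collapses to \(\Set{\Func{c_{\mu}}{v_k}}\) because it is rooted at a vertex with no in-edges, so \(\mu(v_k)\) admits no external in-edge and \(v_k\) must equal the unique source \(v_1\) of \(\TT{6}\); the dual statement forces \(v_k = v_6\) whenever \(\Func{c_{\mu}}{v_k}\) is a sink. Consequently, at most one center lies at a source of \(G\) and at most one at a sink. \textbf{Three-on-a-path:} if three centers \(\Func{c_{\mu}}{v_{i_1}}, \Func{c_{\mu}}{v_{i_2}}, \Func{c_{\mu}}{v_{i_3}}\) with \(i_1 < i_2 < i_3\) all lie on the same backbone path \(\pi\), then the \(\TT{6}\)-edge \((v_{i_1}, v_{i_3})\) is realised by a \(D\)-path from the out-tree of \(\mu(v_{i_1})\) to the in-tree of \(\mu(v_{i_3})\) that must avoid \(\mu(v_{i_2})\); because cross edges go in a single direction and sources/sinks have no in-/out-edges, the path is forced to stay on \(\pi\) and therefore passes through \(\Func{c_{\mu}}{v_{i_2}}\), contradicting disjointness.

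Given these, the lemma follows at once. Any non-backbone vertex of \(G\) is either a source (in the selector case) or a sink (in the verifier case), so source/sink rigidity leaves at most one center outside the backbone; hence at least five of the six centers lie on \(V(P^{r,i}) \cup V(Q^{r,i})\) (respectively \(V(A^{i,j}) \cup V(B^{i,j})\)). The pigeonhole principle places three of them on a single backbone path, and three-on-a-path delivers the contradiction.

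The main obstacle is verifying three-on-a-path in detail: one must argue that the \(D\)-path realising \(\mu((v_{i_1}, v_{i_3}))\) cannot escape the backbone \(\pi\). Its start lies in the out-tree of \(\mu(v_{i_1})\), which may extend from \(\pi\) to the opposite backbone via a cross edge but cannot return; its end lies in the in-tree of \(\mu(v_{i_3})\), which extends backward along \(\pi\) or, in the selector, may reach a source, but a source has no in-edges and hence cannot be the endpoint of any non-trivial directed path. These two observations pin both endpoints, and thus the entire \(D\)-path, to \(\pi\), yielding the required contradiction with \(\mu(v_{i_2})\).
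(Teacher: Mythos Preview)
Your argument is correct and follows essentially the same route as the paper: after observing that at most one center can land on a non-backbone vertex (the paper phrases this as ``the five non-source vertices of \(\TT{6}\) have indegree \(\geq 1\), hence their centers avoid the gadget sources''), five centers are forced onto the two backbone paths, and pigeonhole places three on one of them, yielding the same contradiction you describe. Your ``three-on-a-path'' paragraph is exactly the argument the paper defers to the end of the proof of \cref{stat:no-grid-in-selector}; your version is just spelled out in more detail.
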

\begin{proof}
	We prove the statement for the selector \(\SEL{r}{i}\).
	The proof for the remaining gadgets follow analogously.
    We assume towards a contradiction that \(\SEL{r}{i}\) contains a \(G \coloneq \TT{6}\) as a butterfly minor.
    Let \(\mu\) be the model of \(G\) in \(\SEL{r}{i}\).
    
    Let \(s\) be the vertex of \(G\) with in-degree \(0\) and \(U \coloneq V(G - s)\).
    Then, the indegree of all vertices \(u \in U\) is at least \(1\).
    Therefore, \(\Func{c_{\mu}}{u} \neq s^{r, i, j}\) for all \(1 \leq j \leq k\).
    
    Define \(A \coloneq V(P^{r, i}), B \coloneq V(Q^{r, i})\).
    The five vertices of \(U\) lie in \(A \cup B\).
    So \(\Abs{U \cap A} \geq 3\) or \(\Abs{U \cap B} \geq 3\).
    But both cases lead to a contradiction as in the end of the proof of \cref{stat:no-grid-in-selector}.
\end{proof}

\TaskMarcelo{Double check \cref{stat:no-tournament}.}

\begin{lemma}
	\label{stat:no-tournament}
	Let \((n, \{S_{i, j}\}_{i, j = 1}^k)\) be an instance of \GridTiling/.
	Let \((D, T)\) be the reduced instance constructed by \cref{const:grid-tiling-no-grid}.
	Then \(D\) contains no \(\TT{9}\) as a butterfly minor.
\end{lemma}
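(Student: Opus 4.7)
The plan is to mirror the acyclic-grid argument of \cref{sec:vdp-no-grid}, substituting \cref{stat:no-tournament-in-gadgets} for \cref{stat:no-grid-in-selector}. Suppose for contradiction that $\mu$ is a butterfly model of $\TT{9}$ in $D$. Because $\TT{9}$ has exactly one source and one sink, \cref{stat:butterfly-minor-restriction} confines $\Image(\mu) \subseteq D[V(\SEL{r}{i}) \cup V(\VER{i}{j})]$ for some $i, j \in [k]$ (the $\SEL{c}{j}$ case being symmetric). Partition $V(\TT{9}) = A \cup B$ according to whether $c_\mu(v)$ lies in the selector or in the verifier.

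By \cref{stat:gadget-connectivity} there are no arcs from the verifier back into the selector, and this has two consequences. First, $\TT{9}$ has no arc from $B$ to $A$, so $A$ is a prefix and $B$ a suffix of $\TT{9}$'s topological order, and $A$ induces $\TT{|A|}$. Second, for every $v \in A$ the in-tree of $\mu(v)$ is entirely contained in the selector (any of its vertices must reach $c_\mu(v) \in V(\SEL{r}{i})$), and every path $\mu((u, v))$ with $u, v \in A$ stays inside the selector (it begins and ends there and cannot round-trip through the verifier). Restricting each $\mu(v)$ with $v \in A$ to its selector part therefore yields a valid butterfly model of $\TT{|A|}$ in $\SEL{r}{i}$, and the symmetric construction yields a butterfly model of $\TT{|B|}$ in $\VER{i}{j}$. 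Invoking \cref{stat:no-tournament-in-gadgets} on both sides gives $|A|, |B| \leq 5$, which combined with $|A| + |B| = 9$ leaves only $\{|A|, |B|\} = \{4, 5\}$.

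The main obstacle is this final $(5, 4)$ case; assume $|A| = 5$. Here the ``three centers on a single directed path is impossible'' step from \cref{stat:no-tournament-in-selector} can be reused to rigidify the placements: if $c_\mu(v_1)$ were not at some source $s^{r, i, \cdot}$ of the selector, then all five centers of $A$ would be non-sources, giving five centers in $V(P^{r, i}) \cup V(Q^{r, i})$ and forcing three onto one of these two directed paths, contradicting the lemma's argument. Hence $c_\mu(v_1) \in \{s^{r, i, \cdot}\}$ and the remaining four selector-side centers split $2{+}2$ between $P^{r, i}$ and $Q^{r, i}$; moreover, the absence of arcs $Q^{r, i} \to P^{r, i}$ forces the two $P$-centers to precede the two $Q$-centers in $\TT{9}$'s topological order. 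A parallel analysis inside the verifier pins down the four $B$-centers. Finally, the $|A| \cdot |B| = 20$ cross-edges of $\TT{9}$ from $A$ to $B$ must each be realised by a path exiting the selector through some $(q^{r, i}_{\ell, j}, a^{r, i, j}_{(\ell, x)}) \in E'$ while remaining internally disjoint from every non-incident $\mu(w)$; showing that this is incompatible with the rigid $2{+}2$ placements on both sides is the technical core of the argument and yields the desired contradiction.
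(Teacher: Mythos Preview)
Your outline matches the paper's approach exactly through the reduction to \(\{|A|,|B|\}=\{4,5\}\): the use of \cref{stat:butterfly-minor-restriction}, the partition into \(A\) and \(B\), the observation that \(A\) is a prefix of the topological order, and the restriction argument giving \(|A|,|B|\leq 5\) via \cref{stat:no-tournament-in-gadgets} are all correct and essentially identical to the paper. The rigidification step for \(|A|=5\) (forcing \(c_\mu(v_1)\) onto a source and the remaining four centers into a \(2{+}2\) split on \(P^{r,i}\) and \(Q^{r,i}\)) also agrees with the paper's \cref{claim:no-three-vertices-on-path}.

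There are, however, two genuine gaps. First, you only treat \(|A|=5\); the paper handles \(|A|=4\) and \(|A|=5\) as separate cases with substantially different arguments (for \(|A|=4\) the detailed work happens on the verifier side, where five centers must be placed and \(c_\mu(v_9)\) is forced into \(\{t^{r,i,j},t^{c,i,j}\}\)). These two cases are not symmetric, because the selector has extra sources \(s^{r,i,j}\) while the verifier has extra sinks \(t^{r,i,j},t^{c,i,j}\), and the connecting edges \(E'\) go only one way.

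Second, and more seriously, your final paragraph asserts the contradiction without producing it. Saying that the twenty cross-paths ``must each be realised \ldots\ while remaining internally disjoint'' and that this ``is the technical core'' is an accurate description of what remains to be done, not a proof. In the paper this step occupies most of the argument: one defines the paths \(R_{x,y}\subseteq \mu(v_x)\cup\mu((v_x,v_y))\cup\mu(v_y)\), proves an auxiliary claim that \(R'_{4,t}\) and \(R'_{5,t}\) first enter \(\VER{i}{j}\) at distinct vertices of \(A^{i,j}\), and then traces through several subcases on the relative positions of the centers and the ``jump'' edges \((a^{i,j}_d,b^{i,j}_d)\) to force some \(R_{x,y}\) to intersect a non-incident \(R_{k,\ell}\). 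A counting observation about \(|A|\cdot|B|=20\) does not by itself produce this contradiction; you need the path-level analysis.
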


\begin{proof}
    We assume towards a contradiction that \(D\) contains a \(G \coloneq \TT{9}\) as a butterfly minor.
    We let \(v_1, \ldots v_9\) be the vertices of \(G\) in the topological order.
    Let \(\mu\) be the model of $G$ in \(D\).

    It is easy to verify that \(G\) has exactly one source and one sink.
    By \cref{stat:butterfly-minor-restriction}, there exist \(i, j \in [n]\) such that \(\Image(\mu) \subseteq D[V(\SEL{r}{i}) \cup V(\VER{i}{j})]\) or \(\Image(\mu) \subseteq D[V(\SEL{c}{j}) \cup V(\VER{i}{j})]\).
    Without loss of generality we assume that \(\Image(\mu) \subseteq D[V(\SEL{r}{i}) \cup V(\VER{i}{j})]\).
    Define a partition of the vertices of $G$ as follows
    \begin{align*}
        A &\coloneq \CondSet{v \in V(G)}{\Func{c_{\mu}}{v} \in V(\SEL{r}{i})}, \\
        B &\coloneq \CondSet{v \in V(G)}{\Func{c_{\mu}}{v} \in V(\VER{i}{j})}.
    \end{align*}

    There is no edge from \(B\) to \(A\) as otherwise the model would give a path from \(\Func{c_{\mu}}{b} \in V(\VER{i}{j})\) to \(\Func{c_{\mu}}{a} \in V(\SEL{r}{i})\) for some \(b \in B\) and \(a \in A\).
    But this contradicts \cref{stat:gadget-connectivity}.

    Assume towards a contradiction that \(\Abs{A} \geq 6\).
    As there is no edge from \(B\) to \(A\), this implies \(\Func{c_{\mu}}{v_1}, \ldots, \Func{c_{\mu}}{v_6} \in A\).
		Since there is no edge from \(\VER{i}{j}\) to \(\SEL{r}{i}\), restricting \(\mu\) to \(G[\Func{c_{\mu}}{v_1}, \ldots, \Func{c_{\mu}}{v_6}]\) and intersecting it with \(\SEL{r}{i}\) would give a \(\TT{6}\) butterfly minor in \(\SEL{r}{i}\).
    But by \cref{stat:no-tournament-in-verifier}, \(\SEL{r}{i}\) does not contain any \(\TT{6}\) as a butterfly minor.
    Similarly, if \(\Abs{B} \geq 6\) and \(\Func{c_{\mu}}{v_4}, \ldots, \Func{c_{\mu}}{v_9} \in B\),
    we would get a \(\TT{6}\) butterfly minor in \(\VER{i}{j}\).
    But by \cref{stat:no-tournament-in-selector}, \(\SEL{r}{i}\) does not contain any \(\TT{6}\) as a butterfly minor.

    Now, \(\Abs{A} + \Abs{B} = 9\) together with the above statement implies that \(\Abs{A} = 4\) and \(\Abs{B} = 5\) or that \(\Abs{A} = 5\) and \(\Abs{B} = 4\).
    There is no edge from \(B\) to \(A\), so we have \(\Func{c_{\mu}}{v_1}, \dots, \Func{c_{\mu}}{v_4} \in A\) and \(\Func{c_{\mu}}{v_5}, \dots, \Func{c_{\mu}}{v_9} \in B\) or \(\Func{c_{\mu}}{v_1}, \dots, \Func{c_{\mu}}{v_5} \in A\) and \(\Func{c_{\mu}}{v_6}, \dots, \Func{c_{\mu}}{v_9} \in B\).

    In the following we will now show that neither case is true, which proves the statement of the lemma.
    For all \(1 \leq x < y \leq 9\) we let \(R_{x, y}\) be the unique path from
		\(\Func{c_{\mu}}{v_x}\) to \(\Func{c_{\mu}}{v_y}\) satisfying
    \[
        R_{x, y} \subseteq \mu(v_x) \cup \mu((v_x, v_y)) \cup \mu(v_y).
    \]
    Note that for \(1 \leq x < y \leq 9\) and \(1 \leq k \leq 9\) with \(k \neq x, y\) we have \(\Func{c_{\mu}}{v_k} \notin V(R_{x, y})\) and for \(1 \leq k < \ell \leq 9\) with \(\Set{k, \ell} \cap \Set{x, y} = \emptyset\) also \(R_{k, \ell} \cap R_{x, y} = \emptyset\).

		First, we establish a property which will be repeatedly used later.
		\begin{claim}
			\label{claim:no-three-vertices-on-path}
			The sets
			\(A \cap \V{P^{r,i}}\),
			\(A \cap \V{Q^{r,i}}\),
			\(B \cap \V{A^{i,j}}\) and
			\(B \cap \V{B^{i,j}}\)
			contain each at most \(2\) vertices.
		\end{claim}
		\begin{claimproof}
			We prove the statement for the set \(A \cap \V{P^{r,i}}\).
			The other cases follow analogously.

			Assume towards a contradiction that there are three vertices, say, \(u_1, u_2, u_3\),
			appearing in this order along \(P^{r,i}\) in \(A \cap \V{P^{r,i}}\).
			Then, every \(u_1\)-\(u_3\) path contains \(u_2\).
			This, however, contradicts the fact that the corresponding \(u_1\)-\(u_3\) path
			\(R_{a,b}\) defined above does not contain \(u_2\).
		\end{claimproof}

    We will call edges of \(\VER{i}{j}\) of the form \((a^{i, j}_{d}, b^{i, j}_{d})\) for some \(d \in [n] \times [n]\) \emph{jumps}.
    For notational convenience we write
    \[
        T \coloneq \Set{t^{r, i, j}, t^{c, i, j}}.
    \]

		\begin{CaseDistinction}
			\Case{Assume towards a contradiction that \(\Func{c_{\mu}}{v_1}\), \(\ldots,\) \(\Func{c_{\mu}}{v_4} \in A\) and \(\Func{c_{\mu}}{v_5}\), \(\ldots,\) \(\Func{c_{\mu}}{v_9} \in B\).}
    
		By \cref{claim:no-three-vertices-on-path}, no three of \(\Func{c_{\mu}}{v_5}, \ldots, \Func{c_{\mu}}{v_9}\) lie on \(A^{i, j}\) (or \(B^{i, j}\)).
    This implies that \(\Func{c_{\mu}}{v_5}, \Func{c_{\mu}}{v_6} \in V(A^{i, j})\) and
		\(\Func{c_{\mu}}{v_7}, \Func{c_{\mu}}{v_8} \in V(B^{i, j})\),
		which implies \(\Func{c_{\mu}}{v_9} \in T\).
    
    Now, \(R_{5,6} \subseteq A^{i,j}\) and \(R_{7,8} \subseteq B^{i,j}\).
    Consider \(R_{5,8}\).
    To avoid intersecting \(\Func{c_{\mu}}{v_6}\) and \(\Func{c_{\mu}}{v_7}\), it has to contain a jump starting before \(\Func{c_{\mu}}{v_6}\) along \(A^{i,j}\) and ending after \(\Func{c_{\mu}}{v_7}\) along \(B^{i,j}\).

    Let \(z \in V(A^{i,j} \cap R_{4,6})\) be the first vertex along \(A^{i,j}\).
    As we have just shown that \(R_{5,8}\) contains a jump starting before \(\Func{c_{\mu}}{v_6}\) along \(A^{i,j}\) and ending after \(\Func{c_{\mu}}{v_7}\) along \(B^{i,j}\), we have \(\Func{c_{\mu}}{v_6} = a^{r,i,j}_d\) or \(\Func{c_{\mu}}{v_6} = a^{c,i,j}_d\) for some \(d \in [n] \times [n]\).
    In either case \(z = a^{r,i,j}_d\) and for all \(k \in [4]\) the paths \(R_{k,6}\) have \(z\) as their first vertex in \(\VER{i}{j}\).
    But there exists only one vertex \(q \in Q^{r,i}\) which as an edge to \(z\).
    Then in particular \(q \in R_{1,6}\), so we have \(q \notin \mu(v_4)\).

    Let \(z' \in V(A^{i,j} \cap R_{4,8})\) be the first vertex along \(A^{i,j}\).
    Consider \(R_{6,9}\).
    We claim that it does contain a jump starting before \(z'\) along \(A^{i,j}\) and ending after \(\Func{c_{\mu}}{v_8}\) along \(B^{i,j}\).
    Assume towards a contradiction that it does not.
    If now \(R_{4,8}\) also contains no jumps, then it would intersect \(R_{6,9}\).
    But even if \(R_{4,8}\) contains jumps, they are either already contained in \(R_{5,6}\) or intersect \(R_{6,9}\) nevertheless.

    Thus, \(R_{6,9}\) contains a jump starting before \(z'\) along \(A^{i,j}\) and ending after \(\Func{c_{\mu}}{v_8}\) along \(B^{i,j}\).
    Hence, for all \(k \in [4]\) the paths \(R_{k,8}\) have \(z'\) as their first vertex in \(\VER{i}{j}\).
    But there exists only one vertex \(q' \in Q^{r,i}\) which has an edge to \(z'\).
    Then in particular \(q' \in R_{1,8}\), so we have \(q' \notin \mu(v_4)\).

    Consider the paths \(R_{4,6} q\) and \(R_{4,8} q'\). 
    As \(\Func{c_{\mu}}{v_4}\) lies on \(Q^{r,i}\), they have to intersect.
    If \(q \in R_{4,8} q'\), then in particular \(q \in \mu(v_4)\).
    A contradiction.
    On the hand, if \(q' \in R_{4,6} q\), then in particular \(q' \in \mu(v_4)\).
    Again a contradiction.

		\Case{Assume towards a contradiction that \(\Func{c_{\mu}}{v_1}\), \dots, \(\Func{c_{\mu}}{v_5} \in A\) and \(\Func{c_{\mu}}{v_6}\), \dots, \(\Func{c_{\mu}}{v_9} \in B\).}

		By \cref{claim:no-three-vertices-on-path}, no three of \(\Func{c_{\mu}}{v_6}, \ldots, \Func{c_{\mu}}{v_9}\) lie on \(A^{i, j}\) (or \(B^{i, j}\)).
		As \(v_1\) is the only source in \(\TT{9}\),
		exactly one vertex of \(A\) is in the set \(\Set{s^{r,i,1}, s^{r,i,2}, \ldots, s^{r,i,k}}\).
		Thus, \(A \cap \V{P^{r,i}} = \Set{\Func{c_{\mu}}{v_2}, \Func{c_{\mu}}{v_3}}\) and
		\(A \cap \V{Q^{r,i}} = \Set{\Func{c_{\mu}}{v_4}, \Func{c_{\mu}}{v_5}}\).

		For each \(s \in [5]\) and each \(t \in \{6,7,8\}\) let \(R_{s,t}'\)
		be the subpath of \(R_{s,t}\) from its start
		until its first intersection with \(\VER{i}{j}\).

		\begin{claim}
			\label{claim:distinct-arrival-at-A-i-j}
			For each \(t \in \{6,7,8\}\), the paths
			\(R_{4,t}'\) and \(R_{5,t}'\) are vertex disjoint.
		\end{claim}
		\begin{proof}
			We prove the statement for \(R_{4,7}'\) and \(R_{5,7}'\).
			The other cases follows analogously.

			Note that \(\Func{c_{\mu}}{v_4}, \Func{c_{\mu}}{v_5} \in \V{Q^{r,i}}\).
			If \(R_{4,7}'\) intersects \(R_{5,7}'\) inside \(\SEL{r}{i}\),
			then \(\Func{c_{\mu}}{v_5} \in \V{R_{4,7}'}\), a contradiction.

			Hence, there are indices \(e < f\)
			such that
			the last vertex of \(R_{4,7}'\) in \(\SEL{r}{i}\) is \(q^{r,i}_{e,j}\),
			the last vertex of \(R_{5,7}'\) in \(\SEL{r}{i}\) is \(q^{r,i}_{f,j}\)
			This implies that
			the first vertex of \(R_{4,7}'\) in \(\VER{i}{j}\)
			is some \(a^{r,i,j}_{e,x_1}\), for some \(x_1\) and 
			the first vertex of \(R_{5,7}'\) in \(\VER{i}{j}\)
			is some \(a^{r,i,j}_{f,x_2}\), for some \(x_2\).
			As \(e < f\), we have
			\(a^{r,i,j}_{e,x_1} \neq a^{r,i,j}_{f,x_2}\).
			Hence, \(R_{4,7}'\) and \(R_{5,7}'\) are vertex disjoint.
		\end{proof}

		\begin{Subcase}
			\Case{\(\Func{c_{\mu}}{v_6}, \Func{c_{\mu}}{v_7}\) lie in \(A^{i,j}\).}

			In particular, there are \(e < h\) and \(x_1, x_4\)
			such that
			\(\Func{c_{\mu}}{v_6} \in \{a^{i,j}_{e,x_1}, a^{r,i,j}_{e,x_1}, a^{c,i,j}_{e,x_1}\}\) and
			\(\Func{c_{\mu}}{v_7} \in \{a^{i,j}_{h,x_4}, a^{r,i,j}_{h,x_4}, a^{c,i,j}_{h,x_4}\}\).
			Because \(\Func{c_{\mu}}{v_7}\) can reach \(\Func{c_{\mu}}{v_8}\),
			there is some \(h' \geq h\) and some \(x_5\)
			such that
			\(\Func{c_{\mu}}{v_8} \in \Set{b^{i,j}_{h', x_5}, b^{r,i,j}_{h', x_5}, b^{c,i,j}_{h', x_5}}\).

			There is exactly one path from \(\Func{c_{\mu}}{v_6}\), \(\End{R'_{4,7}}\) and \(\End{R'_{5,7}}\) to \(\Func{c_{\mu}}{v_7}\).
			In particular, \(R_{6,7}\) is exactly the subpath of \(A^{i,j}\)
			from \(\Func{c_{\mu}}{v_6}\) to \(\Func{c_{\mu}}{v_7}\).
			Note that, by definition of \(\mu\),
			the paths \(R_{4,8}\) and \(R_{5,8}\) do not intersect \(R_{6,7}\).

			By \cref{claim:distinct-arrival-at-A-i-j},
			\(R'_{4,7}\) and \(R'_{5,7}\) are vertex disjoint.
			Hence, there are \(f < g\) and \(x_2, x_3\)
			such that
			\(e \leq f\),
			\(\End{R_{4,7}'} = a^{r,i,j}_{f,x_2}\) and
			\(\End{R_{5,7}'} = a^{r,i,j}_{g,x_3}\).

			Because \(e < h \leq h'\), we have that \(b^{r,i,j}_{h,x_4},b^{c,i,j}_{h,x_4}, b^{i,j}_{h,x_4} \in \V{R_{6,8}}\).

			By \cref{claim:distinct-arrival-at-A-i-j},
			\(R'_{4,8}\) and \(R'_{5,8}\) are vertex disjoint.
			If \(\End{R_{4,8}'}\) lies before \(\Func{c_{\mu}}{v_6}\) along \(A^{i,j}\),
			then \(b^{r,i,j}_{h,x_4}, b^{c,i,j}_{h,x_4}, b^{i,j}_{h,x_4} \in \V{R_{4,8}}\).
			However, this means that \(R_{6,9}\) must intersect \(R_{4,8}\), a contradiction.
			An analogous argument holds for \(R_{5,8}\).
			
			If \(\End{R_{4,8}'}\) lies between
			\(\Func{c_{\mu}}{v_6}\) and 
			\(\Func{c_{\mu}}{v_7}\)
			along \(A^{i,j}\),
			then it intersects \(R_{6,7}\), a contradiction.
			The same holds for \(R_{5,8}\).

			Hence, \(\End{R_{4,8}}, \End{R_{5,8}}\) lie after
			\(\Func{c_{\mu}}{v_7}\)
			along \(A^{i,j}\).
			Since \(\End{R_{4,8}} \neq \End{R_{5,8}}\),
			we have that \(h' \geq h + 2\).
			However, this implies that \(R_{7,9}\) must intersect \(R_{4,8}\) and \(R_{5,8}\), a contradiction.
		
			\Case{\(\Func{c_{\mu}}{v_6}\) lies in \(A^{i,j}\) and \(\Func{c_{\mu}}{v_7}\) lies in \(B^{i,j}\).}	

			In particular, there are \(f \leq h\) and \(y_2, y_4\)
			such that
			\(\Func{c_{\mu}}{v_6} \in \{a^{i,j}_{f,y_2}, a^{r,i,j}_{f,y_2}, a^{c,i,j}_{f,y_2}\}\) and
			\(\Func{c_{\mu}}{v_7} \in \{b^{i,j}_{h,y_4}, b^{r,i,j}_{h,y_4}, b^{c,i,j}_{h,y_4}\}\).
			Because \(\Func{c_{\mu}}{v_7}\) can reach \(\Func{c_{\mu}}{v_8}\),
			there is some \(h' \geq h\) and some \(y_5\)
			such that
			\(\Func{c_{\mu}}{v_8} \in \Set{b^{i,j}_{h', y_5}, b^{r,i,j}_{h', y_5}, b^{c,i,j}_{h', y_5}}\).
			Additionally, the path \(R_{7,8}\) is the unique \(\Func{c_{\mu}}{v_7}\)-\(\Func{c_{\mu}}{v_8}\) path in \(B^{i,j}\).

			By \cref{claim:distinct-arrival-at-A-i-j},
			for each \(t \in \{6,7,8\}\),
			\(R_{4,t}'\) and \(R'_{5,t}\) are vertex disjoint.
			Thus, there are \(e_1 \neq e_2, e_3 \neq e_4, e_5 \neq e_6\) and
			not necessarily distinct \(x_1, x_2, x_3, x_4, x_5, x_6\)
			such that
			\(\End{R_{4,6}'} = a^{r,i,j}_{e_1,x_1}\),
			\(\End{R_{5,6}'} = a^{r,i,j}_{e_2,x_2}\),
			\(\End{R_{4,7}'} = a^{r,i,j}_{e_3,x_3}\),
			\(\End{R_{5,7}'} = a^{r,i,j}_{e_4,x_4}\),
			\(\End{R_{4,8}'} = a^{r,i,j}_{e_5,x_5}\),
			\(\End{R_{5,8}'} = a^{r,i,j}_{e_6,x_6}\),
			\(e_1, e_2 \leq f\).

			Because \(\End{R_{4,7}'}, \End{R_{5,7}'}\) can reach \(\Func{c_{\mu}}{v_7}\),
			we have that \(e_3, e_4 < e_1, e_2 \leq f\).
			This implies that \(b^{r,i,j}_{h,x_4} \in \V{R_{4,7}} \cap \V{R_{5,7}}\).
			Additionally, \(R_{6,9}\) cannot intersect \(B^{i,j}\) before
			\(\Func{c_{\mu}}{v_7}\), as otherwise it would intersect
			\(R_{4,7}\).

			If \(e_5 < f\) or \(e_6 < f\),
			then one of \(R_{4,8}, R_{5,8}\)
			must intersect \(R_{7,8}\) or
			it must intersect both \(R_{4,6}\) and \(R_{5,6}\).
			This contradicts the definition of these paths.
			Hence, \(f \leq e_5, e_6\).

			We now have that \(R_{6,9}\) intersects \(R_{7,8}\) or
			it intersects one of \(R_{4,8}, R_{5,8}\).
			Both cases lead to a contradiction.
		\end{Subcase}
		\end{CaseDistinction}

		In both cases above, we obtain a contradiction.
		Thus, the initial assumption that \(D\) contains a \(\TT{9}\) as a butterfly minor is false, as desired.
\end{proof}

The analysis of \cref{stat:no-tournament} is essentially tight, as shown by the observation below.

\begin{observation}
    Let $n \geq 5$ and $k$ be a natural number and define \(S_{i, j} \coloneq [n] \times [n]\) for all \(i, j \in [k]\) as an instance of \GridTiling/.
		Let \((D, T)\) be the reduced instance constructed by \cref{const:grid-tiling-no-grid}.
    Then \(D\) contains a \(\TT{8}\) as a butterfly minor as shown in \cref{fig:K-8-in-D}.

    \begin{figure}[htbp]
        \centering
        \includegraphics[width=0.94\textwidth]{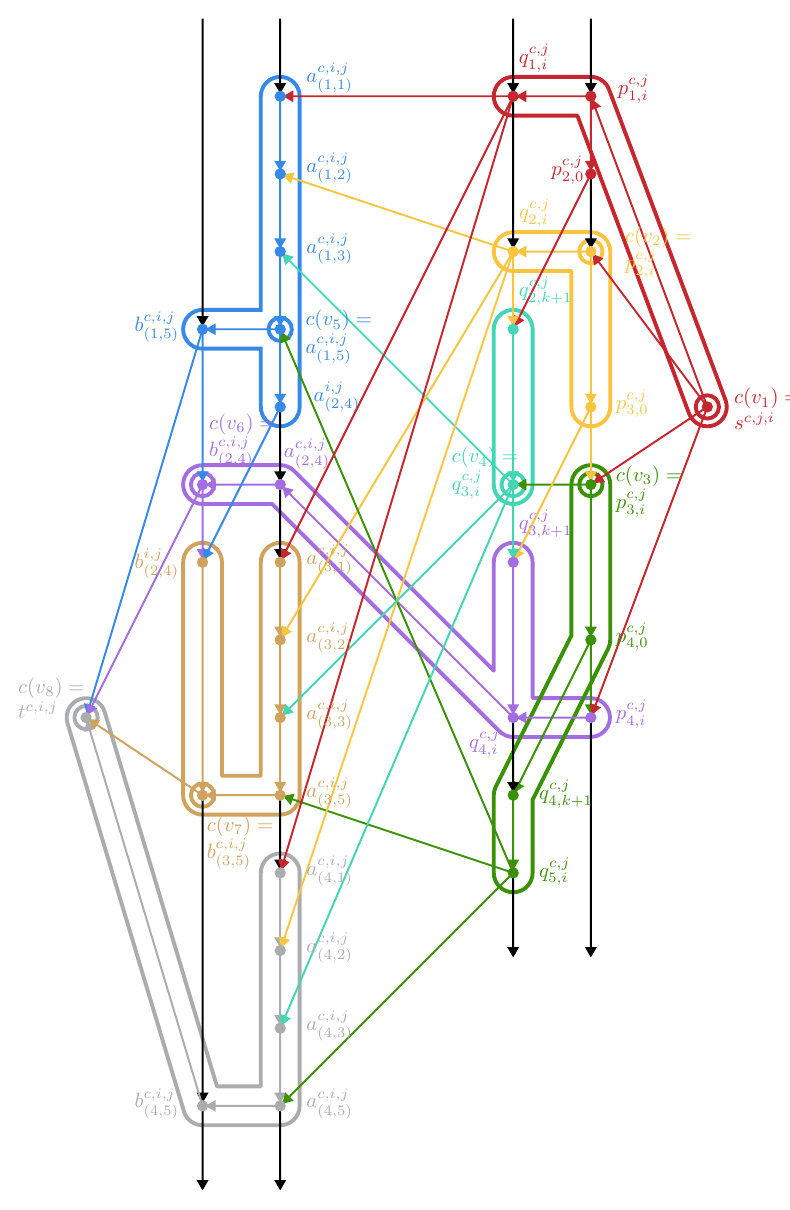}
        \caption{
            A butterfly model \(\mu\) of the \(\TT{8}\) in \(D\).
						Subdivided edges are still drawn as straight lines.
                        The left part is $\VER{i}{j}$ and the right part is $\SEL{c}{j}$.
						The subgraph \(\mu(v_i)\) is indicated by the colored region containing the center \(\Func{c_{\mu}}{v_i}\) and
						the path \(\mu((v_i, v_j))\) is indicated by the paths leaving and entering the regions
						and is colored as \(\mu(v_i)\).
                                }
        \label{fig:K-8-in-D}
    \end{figure}
\end{observation}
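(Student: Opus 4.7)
The plan is to verify directly that the butterfly model $\mu$ depicted in \cref{fig:K-8-in-D} satisfies the four conditions of the butterfly-minor definition. Fix any indices $i,j\in[k]$; the whole model lives inside $D[V(\VER{i}{j})\cup V(\SEL{c}{j})]$. The assumption $S_{i,j}=[n]\times[n]$ ensures that every jump edge $(a^{\star,i,j}_d, b^{\star,i,j}_d)$ and every sink edge $(b^{\star,i,j}_d, t^{\star,i,j})$ is present in the verifier, and the assumption $n\geq 5$ provides enough room along each of the four long paths $P^{c,j}$, $Q^{c,j}$, $A^{i,j}$ and $B^{i,j}$ to host several disjoint branch sets.

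Four of the eight centers are placed in the selector, distributed between $P^{c,j}$ and $Q^{c,j}$, and four are placed in the verifier, distributed between $A^{i,j}$ and $B^{i,j}$, in a way that respects the topological ordering $v_1,\ldots,v_8$ of $\TT{8}$. Each branch set $\mu(v_k)$ is an in-tree joined to an out-tree at the single center $c_\mu(v_k)$, enlarged along the surrounding long path just enough to serve as a branching point for the $k-1$ incoming and $8-k$ outgoing edges of $v_k$. The paths $\mu((v_k,v_\ell))$ then follow one of a small number of patterns: a subpath of a single long path; a shortcut edge $(p^{c,j}_{\ell,m}, q^{c,j}_{\ell,m})$ from $P^{c,j}$ to $Q^{c,j}$; a connecting edge $(q^{c,j}_{\ell,i}, a^{c,i,j}_{(x,\ell)})$ from the selector into the verifier; or a jump $(a^{\star,i,j}_d, b^{\star,i,j}_d)$ crossing from the $A$-part to the $B$-part of $\VER{i}{j}$.

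The main obstacle is combinatorial bookkeeping: one must confirm that the eight branch sets are pairwise vertex-disjoint and that each of the $\binom{8}{2}=28$ edge-paths avoids the branch sets of its six non-incident vertices. Once the centers have been placed at mutually distinct blocks along each long path and the jump coordinates have been chosen so that the $A$-to-$B$ crossings use pairwise distinct indices in $[n]\times[n]$, these conditions reduce to a direct inspection of \cref{fig:K-8-in-D}. The remaining structural requirement, namely that each $\mu(v_k)$ decomposes as an in-tree joined to an out-tree at the single center, holds by construction, because the surrounding long paths are themselves paths and the additional incoming and outgoing branches all attach on exactly one side of the center.
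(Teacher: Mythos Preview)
Your proposal is correct and matches the paper's approach exactly: the paper offers no argument beyond \cref{fig:K-8-in-D} itself, so verifying that the depicted model satisfies the butterfly-minor axioms is precisely what is intended. Your added explanation of where the centers sit and which edge-types the $28$ paths use is a faithful unpacking of the figure rather than a different route.
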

\subsection{Bounded ear anonymity}
\label{sec:vdp-low-ea}

\begin{lemma}
	\label{stat:low-anonymity}
	Let \((n, \{S_{i, j}\}_{i, j = 1}^k)\) be an instance of \GridTiling/.
	Let \((D, T)\) be the reduced instance constructed by \cref{const:grid-tiling-no-grid}.
	Then \(\eanon{D} \leq 5\).
\end{lemma}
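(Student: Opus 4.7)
The plan is to show that every maximal ear $P$ of $D$ admits an identifying sequence of length at most~$5$, formed by the first arc of $P$, its ``transition arcs'' between the structural pieces of the gadgets, and (in a few cases) the last arc of $P$.

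First, I would note that $D$ is a DAG, so every maximal ear is a maximal directed path from some source to some sink. By \cref{stat:gadget-connectivity}, any such path $P$ is contained in the union of a single selector $\SEL{\circ}{i}$ and at most one reachable verifier $\VER{i}{j}$: no arc leaves a verifier or enters a selector, and the arcs in $E'$ link $\SEL{r}{i}$ only to verifiers $\VER{i}{j}$ (resp.\ $\SEL{c}{j}$ only to $\VER{i}{j}$). The relevant sources lying in this union are $p^{\circ,i}_{1,0}$, $s^{\circ,i,j}$, and $a^{i,j}_{(1,1)}$, and the relevant sinks are $p^{\circ,i}_{n,k}$, $q^{\circ,i}_{n,k+1}$, $t^{r,i,j}$, $t^{c,i,j}$, and $b^{i,j}_{(n,n)}$.

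Next, I would identify the four directed-path ``pieces'' $P^{\circ,i}$, $Q^{\circ,i}$, $A^{i,j}$, $B^{i,j}$ and observe that the construction makes the transitions between them \emph{irreversible}: once $P$ uses one of the arcs $(p^{\circ,i}_{\ell,j'}, q^{\circ,i}_{\ell,j'})$ or $(p^{\circ,i}_{\ell,0}, q^{\circ,i}_{\ell,k+1})$ from $P^{\circ,i}$ to $Q^{\circ,i}$, an arc of $E'$ from $Q^{\circ,i}$ into the verifier, a jump $(a^{i,j}_d, b^{i,j}_d)$, $(a^{r,i,j}_d, b^{r,i,j}_d)$, or $(a^{c,i,j}_d, b^{c,i,j}_d)$ from $A^{i,j}$ to $B^{i,j}$, or a jump $(b^{r,i,j}_d, t^{r,i,j})$ or $(b^{c,i,j}_d, t^{c,i,j})$ from $B^{i,j}$ to a sink, no directed walk in $D$ returns to the previous piece. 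Hence $P$ has at most four transition arcs. I would take the identifying sequence of $P$ to consist of the first arc of $P$, each transition arc of $P$, and — when $P$ ends at an ``end-of-piece'' sink $p^{\circ,i}_{n,k}$, $q^{\circ,i}_{n,k+1}$, or $b^{i,j}_{(n,n)}$ (so the final transition arc is absent) — also the last arc of $P$. A short case analysis on which pieces $P$ visits then gives length at most~$5$, the bound being attained by ears starting at $s^{\circ,i,j}$, traversing all four pieces, and ending at $t^{\circ,i,j'}$.

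To verify the sequence is identifying, I would argue that any ear $Q$ of $D$ containing these arcs in order is a subpath of $P$: between any two consecutive specified arcs, irreversibility confines $Q$ to a single piece, and inside a piece (which is a directed path) the subwalk between the prescribed endpoints is unique. Hence both $V(Q) \subseteq V(P)$ and $E(Q) \subseteq E(P)$. The main obstacle will be the bookkeeping in the case distinction, since maximal ears may start at different types of sources — in particular ears from $a^{i,j}_{(1,1)}$ skip the selector entirely, and ears from $p^{\circ,i}_{1,0}$ carry no ``source-to-$P$'' arc beyond the first arc — and terminate at different sinks. In every configuration one has to confirm that the constructed sequence truly certifies $P$ and that its length never exceeds $5$, yielding $\eanon{D} \leq 5$.
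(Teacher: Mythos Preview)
Your proposal is correct and follows essentially the same approach as the paper: build the identifying sequence from the first arc of $P$, the (at most three) inter-piece transition arcs between $P^{\circ,i},Q^{\circ,i},A^{i,j},B^{i,j}$, and the last arc of $P$, then use that each piece is a single directed path to force uniqueness between consecutive specified arcs. The only cosmetic difference is that the paper uniformly always includes both $e_s$ and $e_t$ and bounds the number of transition arcs by $|F\cap E(P)|\le 3$ without a case split, whereas you absorb the $B\to t$ arc into your list of transitions and only add the last arc conditionally; also note that $p^{\circ,i}_{n,k}$ is not actually a sink (it has the outgoing edge $(p^{\circ,i}_{n,k},q^{\circ,i}_{n,k})$), but this does not affect your argument.
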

\begin{proof}
	Let \(P\) be a maximal path in \(D\).
	Let \(s\) be the starting point and \(t\) the endpoint of \(P\).
	We construct an ear identify sequence \(\bar{a}\) for \(P\) as follows.
	
	Let \(\mathcal{H}\) be the set of paths of the form
		\(P^{\circ, i}, Q^{\circ, i}, A^{i, j}\) or \(B^{i, j}\) in \(D\).
	Let \(F\) be
	the set of edges 
	leaving some path in \(\mathcal{H}\) and
	entering a different path in \(\mathcal{H}\).
	The next claim follows easily from the construction of \(D\).
	\begin{claim}
		\label{stat:unique-edge-between-horizontal-paths}
		Let \((v,u) \in F\) be an edge from some path \(R_1\) to another path \(R_2\),
		where \(R_1\) is some path of the form 
		\(P^{\circ, i}\), \(Q^{\circ, i}\) or \(A^{i, j}\), and
		\(R_2\) is some path of the form
		\(Q^{\circ, i}, A^{i, j}\) or \(B^{i, j}\).
		Let \(v_1 \in \V{R_1}\) be a vertex which can reach \(v\) and
		let \(u_2 \in \V{R_2}\) be a vertex which can be reached by \(u\).
		Then there is exactly one \(v_1\)-\(u_2\) path in \(D\)
		using \((v, u)\).
	\end{claim}

	Let \(e_s\) be the outgoing edge of \(s\) in \(P\) and
	let \(e_t\) be the incoming edge of \(t\) in \(P\).
	Add \(e_s\) to \(\bar{a}\).
	Add every arc in \(F \cap \A{P}\) to \(\bar{a}\).
	Add \(e_t\) to \(\bar{a}\).
	As \(\Abs{F \cap \A{P}} \leq 3\), we have that \(\Abs{\bar{a}} \leq 5\).

	We show that \(P\) is the unique maximal path of \(D\)
	visiting \(\bar{a}\) in order.
	Observe that every edge leaving some source in \(D\)
	must end in some path in \(\mathcal{H}\).
	Similarly, every edge entering some sink in \(D\)
	must start at some path in \(\mathcal{H}\).
	Let \(R_1 \in \mathcal{H}\) be the path
	in which \(e_s\) ends and
	let \(R_2 \in \mathcal{H}\) be the path
	in which \(e_t\) starts.

	If \(R_1 = R_2\), then there is exactly one path in \(D\)
	from the head of \(e_s\) to the tail of \(e_t\). 
	Otherwise, there is an edge in \(F \cap \A{P}\).
	Let \(P'\) be the shortest subpath of \(P\)
	containing all edges in \(F \cap \A{P}\).
	By \cref{stat:unique-edge-between-horizontal-paths},
	\(P'\) is the unique minimal path in \(D\) visiting those edges
	Further, by construction there is a unique path from 
	the head of \(e_s\) to the start of \(P'\) and
	a unique path from the end of \(P'\) to the tail of \(e_t\).
	Hence, \(P\) is the unique path in \(D\) visiting the edges of \(\bar{a}\) in order.
	As \(\Abs{\bar{a}} \leq 5\), we conclude that \(\eanon{D} \leq 5\).
\end{proof}

\subsection{\W/[1]-hardness}
\label{sec:vdp-hardness}

\TaskMarcelo[done]{Proof of \cref{statement:linkage-is-w1-hard-even-with-no-grid-minor}.}
\hardnessVertexDisjoint
\begin{proof}
	We provide a reduction from \GridTiling/, which is
	\W/[1]-hard with respect to \(k\) by \cref{stat:grid-tiling-hard}.
	Let \(I_G = (n, \{S_{i, j}\}_{i, j = 1}^k)\) be a \GridTiling/ instance and
	let \(g \geq 1\) be an integer.
	We construct a \kLinkageCongestionShort/ instance \(I_L = (D, T, g)\) by using \cref{const:grid-tiling-no-grid}.
	By \cref{stat:hardness-correct},
	\(I_L\) is a \emph{yes} instance if, and only if, \(I_G\) is a \emph{yes} instance.
	By \cref{stat:no-large-grid,stat:no-tournament}, 
	\(D\) contains no \(\TT{9}\) and
	no acyclic \((5, 5)\)-grid as a butterfly minor.
	Further, \(D\) is acyclic.
	Additionally, by \cref{stat:low-anonymity}, \(\eanon{D} \leq 5\).
	Finally, \(\Abs{T} \in O(gk^2)\), and so
	the existence of an \(f(\Abs{T})n^{o(\sqrt{\Abs{T} / g})}\)-time algorithm for
	\kLinkageCongestionShort/ would imply the existence of an
	\(f(k)n^{o(k)}\) algorithm for \GridTiling/,
	contradicting \cref{stat:grid-tiling-hard}.
\end{proof}

We observe that \kLinkageCongestionShort/ can be solved in polynomial time if
\(\Anon{D} = 1\), as in this case every \(s_i\)-\(t_i\) path is disjoint from
every \(s_j\)-\(t_j\) path if \(i \neq j\), and so
taking the shortest \(s_i\)-\(t_i\) path
for every terminal pair \((s_i, t_i)\)
yields a solution if any exist.

Further, \(\Anon{\TT{k}} = \Ceiling{\frac{k - 2}{2}}\) and
the acyclic \((p,p)\)-grid has ear anonymity \(p - 1\).
Since ear anonymity is closed under taking butterfly minors on DAGs \cite{Milani24},
bounding  \(\eanon{D}\) also excludes the presence of large acyclic grids and
\(\TT{k}\) as butterfly minors.

\section{Edge disjoint paths}
\label{sec:edp}

\subsection{Hardness reduction}
\label{sec:edp-reduction}
\begin{construction}
	\label{const:vertex-to-edge-disjoint-paths}
	Let \((D', T', g)\) be an instance of \kLinkageCongestionShort/.
	We construct an \EdgeDisjointPathsCongestionShort/ instance \((D, T, g)\) iteratively as follows.
	Apply the split operation to every vertex \(v \in \V{D'}\),
	obtaining two vertices \(v_{\text{in}}\) and \(v_{\text{out}}\) in \(D\).
	
	If \(\Indeg[D']{v} \geq 3\),
	add a minimal in-tree \(T^{\text{in}}_{v}\) with \(\Indeg[D']{v}\) leaves
	where each vertex of the in-tree has indegree at most 2.
	Identify each vertex in \(\InN[D]{v_{\text{in}}}\) with
	a distinct leaf of \(T^{\text{in}}_{v}\) and
	remove the edges from \(\InN[D]{v_{\text{in}}}\) to \(v_{\text{in}}\).
	Finally, add an edge from the root of \(T^{\text{in}}_{v}\)
	to \(v_{\text{in}}\).

	If \(\Outdeg[D']{v} \geq 3\),
	we proceed in an analogous fashion as above,
	adding an out-tree \(T^{\text{out}}_{v}\) instead.
	This completes the construction of \(D\).
	Note that 
	\(\Outdegree[D]{v} + \Indegree[D]{v} \leq 3\) holds
	for all \(v \in \V{D}\).

	The multiset of terminal pairs \(T\) is obtained from \(T'\)
	by replacing every occurrence of \((s,t) \in T'\) with
	\((s_{\text{in}}, t_{\text{out}})\).
\end{construction}

\subsection{Excluding a \texorpdfstring{\((7,7)\)}{(7, 7)}-wall}
\label{sec:edp-no-wall}
It is not too difficult to see that the digraph \(D\) given by
\cref{const:vertex-to-edge-disjoint-paths}
contains an acyclic \((5, 5)\)-wall as an immersion.
To exclude the acyclic \((7, 7)\)-wall,
we use the fact that the original instance \(D'\)	
does not contain an acyclic \((5, 5)\)-grid.

\TaskMarcelo{Double check proof of \cref{stat:no-wall-immersion}}
\begin{lemma}
	\label{stat:no-wall-immersion}
 	Let \((D', T', g)\) be a \kLinkageCongestionShort/ instance.
 	Let \((D, T, g)\) be the \EdgeDisjointPathsCongestionShort/ instance given by
 	\cref{const:vertex-to-edge-disjoint-paths}.
	If \(D\) contains an acyclic \((7,7)\)-wall as a weak immersion,
	then \(D'\) contains an acyclic \((5,5)\)-grid as a butterfly minor.
\end{lemma}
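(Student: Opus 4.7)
The plan is to first upgrade the assumed weak immersion to a topological embedding. Since \cref{const:vertex-to-edge-disjoint-paths} ensures every vertex of $D$ has total degree at most~$3$ and the acyclic $(7,7)$-wall $W$ is itself subcubic, no vertex of $D$ can be internal to two distinct edge-paths $\mu(e),\mu(f)$ of the immersion: being internal to one already consumes two of its at most three incident arcs, leaving no room for a second passage. Hence the paths $\mu(e)$ for $e \in \A{W}$ are pairwise internally vertex-disjoint, so $H \coloneq \mu(\V{W}) \cup \bigcup_{e \in \A{W}} \mu(e)$ is a subdivision of $W$ inside~$D$.

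Next, I would transfer $H$ back to $D'$. Every arc introduced by \cref{const:vertex-to-edge-disjoint-paths} is butterfly-contractible in $D$: each tree arc $(x, y)$ has $y$ as the unique out-neighbour of $x$ (every non-root vertex of a binary in- or out-tree has a single parent), and each split arc $(v_{\text{in}}, v_{\text{out}})$ has $v_{\text{in}}$ as the unique in-neighbour of $v_{\text{out}}$. Contracting all such arcs recovers $D'$ from $D$; accordingly, assign to each $x \in \V{D}$ its \emph{owner} $\tau(x) \in \V{D'}$, with $\tau(v_{\text{in}}) = \tau(v_{\text{out}}) = v$ and internal tree vertices of $T^{\text{in}}_v$ or $T^{\text{out}}_v$ owned by~$v$.

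Observe next that $W$ itself contains the $(5,5)$-grid as a butterfly minor: each of the $25$ internal split arcs of $W$ is butterfly-contractible in $W$ (the ``$-$'' copy has its ``$+$'' partner as unique in-neighbour), so contracting them yields the acyclic $(7,7)$-grid, which contains the acyclic $(5,5)$-grid as a subgraph. Composing this butterfly model with the subdivision $H$ produces a butterfly model of the $(5,5)$-grid inside $D$, whose branch sets and edge-paths live inside $H$. I would then project every branch set and edge-path through $\tau$ to obtain a candidate butterfly model in $D'$: acyclicity of $D'$ ensures that each projected edge-path is still a directed path, and each projected branch set remains a valid union of an in-tree and an out-tree meeting at a common root because all contracted arcs are butterfly-contractible.

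The main obstacle is to guarantee that the projected branch sets remain pairwise vertex-disjoint in $D'$, i.e., that no two of the $25$ grid-vertex branch sets share an owner under $\tau$. Equivalently, one has to select a $(5,5)$-subgrid of $W$ whose corresponding wall-vertices map to pairwise distinct gadgets of $D$. The gap from $(5,5)$ to $(7,7)$ provides the necessary slack: a gadget is a directed tree-like ``sand-clock'' (an in-tree converging to $v_{\text{in}}$, the split arc, and an out-tree diverging from $v_{\text{out}}$), and its acyclic tree structure severely restricts which collections of wall-vertices can coexist inside a single gadget. A pigeonhole / local-analysis argument over the $74$ wall-vertices of $W$ distributed among owners should then yield a $(5,5)$-subgrid whose $25$ branch sets occupy pairwise distinct gadgets. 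Once this collision-free subgrid is fixed, the projected model is a valid butterfly model of the $(5,5)$-grid in $D'$, completing the proof.
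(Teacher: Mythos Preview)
Your framework is sound and matches the paper's: upgrading the immersion to a subdivision via the subcubic bound, projecting along the gadget structure back to \(D'\), and extracting the \((5,5)\)-grid from the inner part of the \((7,7)\)-wall. The paper proceeds in exactly this way.

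However, the step you flag as ``the main obstacle'' is also the step you do not actually carry out. Writing that ``a pigeonhole / local-analysis argument \ldots\ should then yield a \((5,5)\)-subgrid whose \(25\) branch sets occupy pairwise distinct gadgets'' is a hope, not a proof, and the actual argument is neither pigeonhole nor a matter of choosing a suitable subgrid with slack. The paper shows directly that \emph{all} split vertices of the wall (the \(50\) vertices arising from the \(25\) inner grid vertices) land in pairwise distinct zones \(Z_v\). Two ingredients do the work. First, a degree observation: a wall vertex of outdegree~\(2\) must be mapped into \(\{v_{\text{out}}\}\cup V(T^{\text{out}}_v)\) for some \(v\), and dually for indegree~\(2\); hence two split vertices of opposite type can never share a zone. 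Second, an acyclicity observation: if \(\mu(w_1),\mu(w_2)\in Z_v\) then \(\mu(w_3)\in Z_v\) for every \(w_3\) on a \(w_1\)--\(w_2\) path in the wall, since otherwise one would obtain a \(Z_x\)--\(Z_v\) and a \(Z_v\)--\(Z_x\) path in the acyclic graph \(D\). Combining these: two split vertices \(w_1,w_2\) of the same type (say both outdegree~\(2\)) that are comparable in the wall are separated by an indegree-\(2\) vertex on the path between them, contradicting the first observation if they shared a zone; if they are incomparable, one uses the unique split vertex \(w_3\) of outdegree~\(2\) that reaches all split vertices, observes that \(w_3\) would also lie in the common zone, and again finds an indegree-\(2\) vertex on a \(w_3\)--\(w_i\) path to derive a contradiction. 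This last part is precisely where the outer ring of the \((7,7)\)-wall is used: it guarantees the existence of such a \(w_3\) among the split vertices. Without this concrete argument your proof is incomplete.
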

	
\begin{proof}
	Assume that \(D\) contains an acyclic \((7,7)\)-wall \(W\) as a weak immersion.
		
	Let \(\mu\) be a weak immersion of \(W\) in \(D\).
	For each \(v \in \V{D'}\), let \(Z_v = \V{T^{\text{in}}_{v}} \cup \V{T^{\text{out}}_{v}} \cup \{v_{\text{in}}, v_{\text{out}}\}\).
	\Cref{claim:degree-zone} is immediate from the degrees of the vertices in \(D\).
	\begin{claim}
		\label{claim:degree-zone}
		For every \(u \in \V{W}\) with \(\Outdegree[W]{u} = 2\)
		there is some \(v \in \V{D'}\)
		such that
		\(\mu(u) \in \{v_{\text{out}}\} \cup \V{T^{\text{out}}_{v}}\).
		Similarly, if \(\Indegree[W]{u} = 2\),
		we have
		\(\mu(u) \in \{v_{\text{in}}\} \cup \V{T^{\text{in}}_{v}}\)
		for some \(v\).
	\end{claim}

	\begin{claim}
		\label{claim:path-in-zone}
		If \(\mu(u_1), \mu(u_2) \in Z_v\),
		then \(\mu(u_3) \in Z_v\) for every \(\mu(u_3) \in \V{W}\)
		which lies on a path from \(u_1\) to \(u_2\).
	\end{claim}
	\begin{claimproof}
			Without loss of generality, assume that \(u_1\) can reach \(u_2\) in \(W\).
			Let \(u_3\) be a vertex in a \(u_1\)-\(u_2\) path in \(W\) and
			let \(x \in \V{D'}\) 
			such that
			\(\mu(u3) \in Z_x\).
			By definition of immersion,
			there is a \(\mu(u_1)\)-\(\mu(u_3)\) path and
			a \(\mu(u_3)\)-\(\mu(u_2)\) path in \(D\).
			By construction of \(D\), this is only possible
			if there is a \(\V{Z_x}\)-\(\V{Z_v}\) and
			a \(\V{Z_v}\)-\(\V{Z_x}\) path in \(D\).
			As \(D\) and \(D'\) are acyclic,
			this  implies that \(x = v\), as desired.
	\end{claimproof}

	Let \(V'\) be the set of split vertices of \(W\),
	that is, the vertices of the acyclic \((7,7)\)-grid
	that were split in order to construct \(W\) (see \cref{fig:7x7-wall}).

	\begin{figure}[!h]
			\centering
			\begin{tikzpicture}[xscale=0.6, yscale=0.75]
				\input{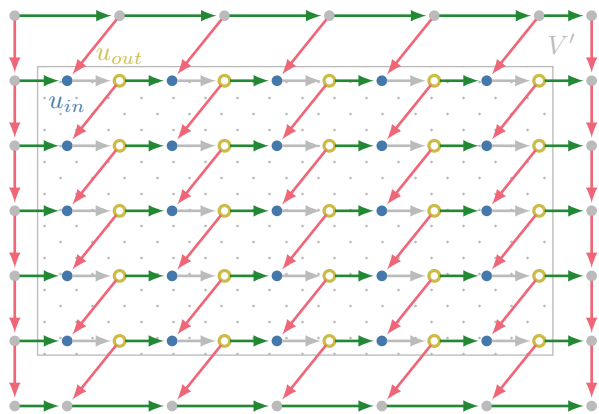}
			\end{tikzpicture}
			\caption{An acyclic \((7,7)\)-wall.
			The filled blue vertices correspond to
			the \emph{in} vertex of a split vertex in the acyclic \((7,7)\)-grid,
			while the hollow yellow vertices correspond to
			the \emph{out} vertex.}
			\label{fig:7x7-wall}
	\end{figure}

	Let \(w_1, w_2 \in V'\) be two distinct vertices.
	Let \(v_1, v_2 \in \V{D'}\) be two vertices
	such that
	\(\mu(w_1) \in Z_{v_1}\) and \(\mu(w_2) \in Z_{v_2}\).
	We claim that \(v_1 \neq v_2\).
	Assume towards a contradiction that this is not the case.

	By \cref{claim:degree-zone}, \(v_1 \neq v_2\) if \(w_1\) and \(w_2\) have different in- or outdegree.
	Hence, we assume, without loss of generality, that \(\Outdegree{w_1} = \Outdegree{w_2} = 2\).
	The case where \(\Indegree{w_1} = \Indegree{w_2} = 2\) follows analogously.
	If there is a path from \(w_1\) to \(w_2\) in \(W\),
	then this path must contain some \(w_3 \in \V{W}\)
	such that \(\Indegree[W]{w} = 2\).
	Hence, \(\mu(w_3) \not \in Z_{v_1} \cup Z_{v_2}\).
	By \cref{claim:path-in-zone}, we obtain that \(v_1 \neq v_2\), as desired.

	So assume that \(w_1\) and \(w_2\) are incomparable in the topological ordering.
	Let \(w_3 \in V'\) be the unique vertex which can reach all vertices of \(V'\) in \(W\)
	(for the case where \(\Indegree{w_1} = \Indegree{w_2} = 2\),
	we pick the unique vertex which can be reached by every vertex in \(V'\) instead).
	By definition of immersion,
	there are two edge-disjoint paths
	from \(\mu(w_3)\) to \(\mu(w_1)\) and \(\mu(w_2)\) in \(D\).
	By \cref{claim:degree-zone}, these paths can only exist if
	\(\mu(w_3) \in Z_{v_1} = Z_{v_2}\) as well.

	As \(\Outdegree[W]{w_3} = 2\), we know that \(w_3, w_1, w_2\) are pairwise non-neighbors in \(W\).
	This implies that
	the paths from \(w_3\) to \(w_1\) and \(w_2\)
	contain some vertex \(w_4\)
	where \(\Indegree[W]{w_4} = 2\).
	However, by \cref{claim:path-in-zone}, 
	\(w_4 \in Z_{v_1}\), a contradiction to \cref{claim:degree-zone}.
	
	Hence, we conclude that \(v_1 \neq v_2\).

	We construct a butterfly model \(\phi\) of an acyclic \((5,5)\)-grid \(G\) in \(D'\) as follows.
	To simplify notation, 
	we consider \(G\) to be the inner \((5, 5)\)-subgrid
	of the acyclic \((7,7)\)-grid
	used to construct the acyclic \((7,7)\)-wall \(W\).
	For each \(u \in \V{G}\),
	let \(u_{\text{in}}, u_{\text{out}} \in V'\)
	be the pair of vertices into which \(u\) was split when constructing the wall.
	Let \(P_u\) be the \(\mu(u_{\text{in}})\)-\(\mu(u_{\text{out}})\) path in \(D\)
	given by \(\mu((u_{\text{in}}, u_{\text{out}}))\).
	For each \((u, w) \in \A{G}\),
	let \(Q_{u,w}\) be the
	\(\mu(u_{\text{out}})\)-\(\mu(w_{\text{in}})\) path in \(D\)
	given by \(\mu((u_{\text{out}}, w_{\text{in}}))\).

	We construct a path \(P'_u\) and a path \(Q'_{u,w}\) in \(D'\) 
	by replacing every occurrence of an edge
	\((v_{\text{in}}, v_{\text{out}})\) in \(P_u\) or \(Q_{u,w}\)
	with the vertex \(v \in \V{D'}\).
	By construction of \(D\), these paths are indeed paths in \(D'\).

	Set \(\phi(u) = P'_u\) and \(\phi((u,w)) = Q'_{u,w}\).
	Observe that \(\Func{c_{\mu}}{u} = \mu(u_{\text{in}})\).
	Because each vertex in \(D\) has outdegree plus indegree at most 3,
	we know that edge-disjoint paths in \(D\) are necessarily also vertex disjoint.
	Hence, all paths \(P_{u}\) and \(Q_{u,w}\) defined above are also pairwise internally vertex disjoint,
	which in turn implies that all \(P'_{u}\) and \(Q'_{u,w}\) are pairwise internally vertex disjoint as well.
	Thus, for each \((u_1, u_2) \in \V{G}\), 
	there is a \(\phi(u_1)\)-\(\phi(u_2)\) path in \(D'\) and
	these paths are internally vertex disjoint.

	Hence, \(D'\) contains the \((5, 5)\)-acyclic grid \(G\) as a butterfly minor.
\end{proof}

\subsection{Bounded ear anonymity}
\label{sec:edp-low-ea}

\TaskMarcelo[done]{Proof of \cref{stat:edge-disjoint-construction-path-mapping}.}

\begin{observation}
	\label{stat:edge-disjoint-construction-path-mapping}
 	Let \((D', T', g)\) be a \kLinkageCongestionShort/ instance where \(D'\) is a DAG.
 	Let \((D, T, g)\) be the \EdgeDisjointPathsCongestionShort/ instance given by
 	\cref{const:vertex-to-edge-disjoint-paths}.
	Let \(P\) be a maximal path in \(D\) and
	let \(P_{1} \cdot P_{2} \cdot \ldots \cdot P_{n} = P\) be a partition of \(P\)
	into subpaths
	such that 
	\(P_i\) is a maximal subpath of \(P\) contained inside
	\(T_{v_i}^{\text{in}}, (v_{i, \text{in}}, v_{i, \text{out}})\) and \(T_{v_i}^{\text{out}}\)
	for some \(v_i \in \V{D'}\).
	Let \(P'\) be the path obtained by replacing every \(P_i\) in \(P\) by \(v_i\).
	Then \(P'\) is a path in \(D'\).
\end{observation}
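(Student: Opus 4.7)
The plan is to verify in two steps that $P'$ is a directed walk in $D'$ and then that it is in fact a path (that is, no vertex is repeated). Both facts rely on a careful description of the edges of $D$ in terms of the zones $Z_v \coloneq V(T_v^{\text{in}}) \cup \{v_{\text{in}}, v_{\text{out}}\} \cup V(T_v^{\text{out}})$ created by \cref{const:vertex-to-edge-disjoint-paths}, where $T_v^{\text{in}}$ or $T_v^{\text{out}}$ is understood to be empty if the corresponding degree in $D'$ is less than $3$.

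First I would classify the arcs of $D$. By inspection of the construction, every arc of $D$ is either an internal arc of some $Z_v$ (that is, an arc of $T_v^{\text{in}}$, of $T_v^{\text{out}}$, or the split arc $(v_{\text{in}}, v_{\text{out}})$), or it corresponds to a unique arc $(u, w) \in A(D')$: it starts at a leaf of $T_u^{\text{out}}$ (or at $u_{\text{out}}$ itself when $\Outdeg[D']{u} < 3$) and ends at a leaf of $T_w^{\text{in}}$ (or at $w_{\text{in}}$ when $\Indeg[D']{w} < 3$). In particular, every arc of $D$ that leaves $Z_v$ originates at a vertex identified with some $w \in \OutN[D']{v}$, and every arc entering $Z_v$ terminates at a vertex identified with some $w \in \InN[D']{v}$. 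Consequently, the arc of $P$ joining the last vertex of $P_i$ to the first vertex of $P_{i+1}$ must be of the second type, which forces $v_i \neq v_{i+1}$ and $(v_i, v_{i+1}) \in A(D')$. This already shows that $P'$ is a walk in $D'$.

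Second I would argue that no vertex of $D'$ appears twice in $P'$, i.e.\ $P$ does not enter the same zone $Z_v$ in two separate maximal subpaths. Suppose towards a contradiction that some $Z_v$ is entered, left, and entered again along $P$. Then there exist a vertex $x \in Z_v$ with an outgoing arc of $P$ leaving $Z_v$ and, later along $P$, a vertex $y \in Z_v$ reached by an arc of $P$ entering $Z_v$, together with a subwalk of $P$ from $x$ to $y$ that avoids $Z_v$. Applying the arc classification from the first step to every arc of this subwalk that leaves or enters a zone yields a directed walk in $D'$ from some $w \in \OutN[D']{v}$ to some $w' \in \InN[D']{v}$. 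Together with $(v, w)$ and $(w', v)$ this forms a closed directed walk through $v$ in $D'$, contradicting the assumption that $D'$ is a DAG.

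The only subtle point is the second step, where one has to be sure that the only way to leave or enter a zone $Z_v$ is via the leaves of its in- and out-trees identified with in- and out-neighbours of $v$ in $D'$; this is immediate from the construction. Combining both steps, $P'$ is a directed path in $D'$, as claimed.
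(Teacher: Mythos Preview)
Your proposal is correct and follows essentially the same approach as the paper: both arguments classify the arcs of $D$ into zone-internal arcs and arcs corresponding to edges of $D'$, verify that consecutive $v_i$ are joined by an edge of $D'$, and use acyclicity to rule out repeated $v_i$. The only cosmetic difference is that the paper argues each $P_i$ must contain the split edge $(v_{i,\text{in}}, v_{i,\text{out}})$ (since $P$ is maximal and the in-/out-trees funnel through it) and hence the $v_i$ are distinct because $P$ has no repeated edges, whereas you deduce distinctness directly from the fact that a walk in the DAG $D'$ cannot repeat a vertex.
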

	
\begin{proof}
	Consider a subpath \(P_i\) of \(P\) and the corresponding vertex \(v_i \in \V{D'}\) as described above.
	After visiting some vertex in \(T_{v_i}^{\text{in}}\),
	\(P_i\) must contain the edge \((v_{i, \text{in}}, v_{i, \text{out}}) \in \A{D}\)
	as \(T_{v_i}^{\text{in}}\) is an in-tree.
	Analogously, before reaching some vertex in \(T_{v_i}^{\text{out}}\),
	the path \(P_i\) must first visit \((v_{i, \text{in}}, v_{i, \text{out}}) \in \A{D}\).

	Since \(D\) is a DAG, we have that \(P\) starts in a source and ends in a sink of \(D\).
	Hence, \(v_i \neq v_j\) if \(i \neq j\).
	Further, if there is an edge from \(T_{v_i}^{\text{out}}\) to \(T_{v_j}^{\text{in}}\)
	in \(D\), then \((v_i, v_j) \in \A{D'}\).
	Thus, the sequence \(P'\) given by \((v_{1}, v_{2}, \ldots, v_{n})\) is a path in \(D'\), as desired.
\end{proof}
	
\TaskMarcelo[done]{Proof of \cref{stat:edge-low-anonymity}.}
\begin{lemma}
	\label{stat:edge-low-anonymity}
 	Let \((D', T', g)\) be a \kLinkageCongestionShort/ instance.
 	Let \((D, T, g)\) be the \EdgeDisjointPathsCongestionShort/ instance given by
 	\cref{const:vertex-to-edge-disjoint-paths}.
	Then \(\eanon{D} \leq \eanon{D'}\).
\end{lemma}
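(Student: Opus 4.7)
The plan is to lift a shortest identifying sequence from $D'$ to one of the same length for $P$ in $D$. Let $P$ be a maximal path in $D$ and let $P'$ be the maximal path in $D'$ obtained from it via \cref{stat:edge-disjoint-construction-path-mapping}. Take a shortest identifying sequence $\bar{a}' = (a'_1, \dots, a'_k)$ for $P'$, with $k \leq \eanon{D'}$, and write $a'_i = (u_i, w_i)$. For each $i$ I would pick an edge $b_i \in \A{D}$ on $P$ that is used only by paths realising the transition $a'_i$: if $\Indeg[D']{w_i} \geq 3$, the tree edge of $T^{in}_{w_i}$ leaving the leaf $u_{i,out}$; otherwise, if $\Outdeg[D']{u_i} \geq 3$, the tree edge of $T^{out}_{u_i}$ entering the leaf $w_{i,in}$; otherwise, the direct edge $(u_{i,out}, w_{i,in})$.

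To show that $\bar{b} = (b_1, \dots, b_k)$ is identifying for $P$, it suffices to prove that every \emph{maximal} path $Q^*$ in $D$ containing $\bar{b}$ in order equals $P$, because any ear $Q$ containing $\bar{b}$ extends to such a $Q^*$ and then $Q \subseteq Q^* = P$. Given such a $Q^*$, \cref{stat:edge-disjoint-construction-path-mapping} yields a maximal projection $Q^{*'}$ in $D'$, and by the local uniqueness of each $b_i$ this projection contains $\bar{a}'$ in order. Because $\bar{a}'$ identifies $P'$ and both $Q^{*'}$ and $P'$ are maximal paths in the DAG $D'$, this forces $Q^{*'} = P'$.

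I would then deduce $Q^* = P$ from $Q^{*'} = P'$ by a gadget-wise uniqueness argument. The equality $Q^{*'} = P'$ pins down both the sequence of gadgets visited and, for each internal gadget, the adjacent gadget on each side, which in turn determines the entry leaf of every in-tree and the exit leaf of every out-tree. Since in- and out-trees admit a unique leaf-to-root and root-to-leaf path respectively, the traversal of each gadget is forced, and at the two ends the source and sink of $D'$ carry no tree at all, so the endpoints of $Q^*$ and $P$ also match. Hence $Q^* = P$ and $\Anon[D]{P} \leq k \leq \eanon{D'}$, and since $P$ was an arbitrary maximal path the bound $\eanon{D} \leq \eanon{D'}$ follows.

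The step I expect to be the main obstacle is verifying carefully that each choice of $b_i$ really singles out the intended transition when one projects back to $D'$. This is immediate in the first and third cases but requires tracking the leaf identifications performed by \cref{const:vertex-to-edge-disjoint-paths} in the configuration where both $T^{in}_{w_i}$ and $T^{out}_{u_i}$ are present, since then the $u_{i,out}$--$w_{i,in}$ path threads through both trees and one must make sure that the chosen arc truly distinguishes this transition from all others.
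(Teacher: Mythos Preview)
Your approach is essentially the paper's: lift an identifying sequence for \(P'\) to one for \(P\), show that any maximal path in \(D\) hitting the lifted sequence projects back (via \cref{stat:edge-disjoint-construction-path-mapping}) to a maximal path containing \(\bar a'\) and hence equal to \(P'\), and then use uniqueness of leaf--root and root--leaf paths in the gadget trees to conclude \(Q^* = P\).

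The one place where you diverge---and where your self-identified obstacle lives---is the choice of \(b_i\). The paper sidesteps your case analysis entirely by always picking the \emph{connector} edge: for \(a'_i = (v,u) \in \A{D'}\) it takes the unique edge \((v',u')\) of \(D\) with \(v'\) on the out-side of \(v\) (a leaf of \(T^{\text{out}}_v\), or \(v_{\text{out}}\) itself if no out-tree was added) and \(u'\) on the in-side of \(u\) (a leaf of \(T^{\text{in}}_u\), or \(u_{\text{in}}\)). This edge is in bijection with \((v,u)\) by construction, so any path in \(D\) using it immediately projects to one using \((v,u)\), with nothing further to check. Your tree-edge choice also works, but it forces you to track the leaf identifications when both \(T^{\text{out}}_{u_i}\) and \(T^{\text{in}}_{w_i}\) are present (in particular, the leaf of \(T^{\text{in}}_{w_i}\) reached from \(u_i\) need not literally be the vertex \(u_{i,\text{out}}\) once the out-tree has been inserted), whereas the connector edge makes that step a one-liner.
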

	
\begin{proof}
	Let \(P\) be a maximal path in \(D\).
	Clearly, \(P\) starts in some \(s_{\text{in}}\) and
	ends in some \(t_{\text{out}}\) for \(s, t \in \V{D'}\).
	Let \(P'\) be the corresponding path in \(D'\) given by
	\cref{stat:edge-disjoint-construction-path-mapping}.
	Let \(\bar{b}\) be an ear-identifying sequence for \(P'\) in \(D'\).
	Construct an ear-identifying sequence for \(P\) in \(D\) as follows.

	For each edge \((v,u)\) in \(\bar{b}\),
	add the edge \((v', u')\) to \(\bar{a}\),
	where \(v'\) is the leaf of \(T_v^{\text{out}}\)
	which contains an edge to	\(T_u^{\text{in}}\) and
	where \(u'\) is the leaf of \(T_u^{\text{in}}\)
	which contains an edge from \(T_v^{\text{out}}\).
	Because \((v,u)\) is an edge in \(D'\),
	the vertices \(v', u'\) exist and
	have an edge between them by construction of \(D\).

	Let \(Q\) be a maximal path in \(D\) visiting the edges of \(\bar{a}\) in order.
	Let \(Q'\) be the corresponding path in \(D'\) given by \cref{stat:edge-disjoint-construction-path-mapping}.
	Clearly, \(Q'\) visits all edges of \(\bar{b}\).
	Hence, \(Q' = P'\), as \(\bar{b}\) is an ear-identifying sequence for \(P'\).

	For each \(v_i\) in \(Q'\),
	let \(D_i\) be the induced subgraph of \(D\)
	given by
	\(T_{v_i}^{\text{in}}, (v_{i, \text{in}}, v_{i, \text{out}})\) and \(T_{v_i}^{\text{out}}\).
	Let \(Q_i\) be the maximal subpath of \(Q\) inside \(D_i\) and
	let \(P_i\) be the maximal subpath of \(P\) inside \(D_i\).

	Because \(T_{v_i}^{\text{in}}\) is an in-tree,
	there is exactly one path from a leaf of \(T_{v_i}^{\text{in}}\) to \(v_{i, \text{in}}\).
	Similarly, there is exactly one path from \(v_{i, \text{out}}\).
	As the edges of \(\bar{a}\) connect the leaves of such trees,
	we have \(P_i = Q_i\) for all \(v_i \in Q'\).
	Hence, \(P = Q\), and so \(\bar{a}\) is an ear-identifying sequence for \(P\).
\end{proof}

\TaskMarcelo[done]{Add results regarding the ETH.}

\subsection{\W/[1]-hardness}
\label{sec:edp-hardness}
The last remaining step for our hardness result is to show that the reduction given by
\cref{const:vertex-to-edge-disjoint-paths} is correct.

\TaskMarcelo[done]{Proof of \cref{statement:linkage-is-w1-hard-even-with-no-wall-immersion}}
\hardnessEdgeDisjoint

\begin{proof}
	We provide a reduction from \kLinkageCongestionShort/,
	which is \W/[1]-hard with respect to \(k\) by \cref{statement:linkage-is-w1-hard-even-with-no-grid-minor},
	even if the input digraph \(D'\) is acyclic, does not contain a \((5, 5)\)-acyclic grid and
	\(\eanon{D'} \leq 5\).
 	
	Let \(I_V = (D', T', g)\) be a \kLinkageCongestionShort/ instance.
 	Let \(I_E = (D, T, g)\) be the \EdgeDisjointPathsCongestionShort/ instance given by
 	\cref{const:vertex-to-edge-disjoint-paths}.
	
	By \cref{stat:no-wall-immersion}, \(D\) does not contain an acyclic \((7,7)\)-grid.
	By \cref{stat:edge-low-anonymity}, \(\eanon{D} \leq \eanon{D'} \leq 5\).
	It is immediate from the construction that all vertices have undirected degree at most 3.
	Additionally, as \cref{const:vertex-to-edge-disjoint-paths} replaces vertices in \(D'\)
	by acyclic digraphs, we have that \(D\) is acyclic if, and only if, \(D'\) is acyclic.
	Finally, \(\Abs{T} = \Abs{T'}\), and so an algorithm with running time 
	\(f(k)n^{o(\sqrt{k/g})}\) for \EdgeDisjointPathsCongestionShort/
	would imply an algorithm with the same running time for \kLinkageCongestionShort/,
	contradicting the ETH by \cref{statement:linkage-is-w1-hard-even-with-no-grid-minor}.

	We show that \(I_V\) is a yes-instance if, and only if,
	\(I_E\) is a yes-instance.
	Recall the definition of \(T_v^{\text{in}}, T_v^{\text{out}}, v_{\text{in}}\) and \(v_{\text{out}}\)
	from \cref{const:vertex-to-edge-disjoint-paths}.

  \textbf{First direction:} If \(I_V\) is a \emph{yes} instance, then \(I_E\) is a \emph{yes} instance.

	Let \(\mathcal{L}'\) be a solution for \(I_V\).
	Let \((s_i, t_i) \in T'\) and let \(L_i' \in \mathcal{L}'\) be the corresponding
	\(s_i\)-\(t_i\) path in \(\mathcal{L}'\).
	
	Construct a path \(L_i\) from \(s_{i, \text{in}}\) to \(t_{i, \text{out}}\) in \(D'\) as follows.
	Let \(v \in \V{L_i'}\).
	If \(v = s_i\), let \(P_v^{\text{in}}\) be the empty path.
	Similarly, if \(v = t_i\), let \(P_v^{\text{out}}\) be the empty path.

	If \(v\) has a predecessor \(u\) along \(L_i'\),
	let \(u'\) be the unique leaf of \(T_v^{\text{in}}\)
	which as an incoming edge from \(T_{u}^{\text{out}}\).
	Define \(P_v^{\text{in}}\) as the unique path from \(u'\) to \(v_{\text{in}}\).

	Similarly, if \(v\) has a successor \(u\) along \(L_i'\),
	let \(u'\) be the unique leaf of \(T_v^{\text{out}}\)
	which has an outgoing edge to \(T_{u}^{\text{in}}\).
	Define \(P_v^{\text{out}}\) as the unique path from \(v_{\text{out}}\) to \(u'\).

	Construct \(L_i\) by
	replacing every vertex \(v\) in \(L_i'\) with the path
	\(P_v^{\text{in}} \cdot (v_{\text{in}}, v_{\text{out}}) \cdot P_v^{\text{out}}\).
	Clearly, \(L_i\) is a path from \(s_{i, \text{in}}\) to \(t_{i, \text{out}}\).
	Further, each edge \((v_{\text{in}}, v_{\text{out}})\)
	is used by at most \(g\) paths as \(\mathcal{L}'\) is a set of paths with vertex congestion at most \(g\).
		Further, each edge in \(T_{v}^{\text{in}}\) and in \(T_{v}^{\text{out}}\) is used at most once per
	path in \(\mathcal{L}'\) containing \(v\).
	Hence, the congestion of the paths \(L_i\) constructed above is at most \(g\), as desired.

  \textbf{Second direction:} If \(I_E\) is a \emph{yes} instance, then \(I_V\) is a \emph{yes} instance.

	Let \(\mathcal{L}\) be a solution for \(I_E\).
	Let \((s_i, t_i) \in T'\) and let \(L_i \in \mathcal{L}\) be the corresponding
	\(s_{i, \text{in}}\)-\(t_{i, \text{out}}\) path in \(\mathcal{L}\).

	Using \(L_i\), construct a path \(L'_i\) from \(s_i\) to \(t_i\)
	as described in \cref{stat:edge-disjoint-construction-path-mapping}.
	
	Each \(T_{v}^{\text{in}}\) and \(T_{v}^{\text{out}}\)
	can be used by at most \(g\) paths in \(\mathcal{L}\),
	as the edge \((v_{\text{in}}, v_{\text{out}})\)
	separates \(T_{v}^{\text{in}}\) from all sinks
	and \(T_{v}^{\text{out}}\) from all sources in \(D\) and
	can be used at most \(g\) times by \(\mathcal{L}\).
	Hence, the paths \(L_i'\) have congestion at most \(g\) and
	connect \(s_i\) to \(t_i\) in \(D'\), as desired.
\end{proof}

\bibliography{references}

\end{document}